\pgfplotsset{compat=1.17}
\newcommand{\red}[1]{{\color{red}#1\normalcolor}}
\newcommand{\R}{\ensuremath{\mathbb{R}}}
\definecolor{Gray}{gray}{0.85}
\definecolor{LightCyan}{rgb}{0.88,1,1}
\newcolumntype{a}{>{\columncolor{Gray}}c}
\newcolumntype{b}{>{\columncolor{white}}c}
\newcommand{\marc}[1]{\marginpar{\small\color{olive}  #1}}
\title{Swap, Shift and Trim to Edge Collapse a Filtration}
\author{Marc Glisse}{Université Paris-Saclay, CNRS, Inria, Laboratoire de Mathématiques d'Orsay, 91405, Orsay, France.}{marc.glisse@inria.fr}{https://orcid.org/0000-0001-6914-1651}{}
\author{Siddharth Pritam}{Shiv Nadar University, School of Engineering, Department of Computer Science, Delhi NCR, India.}{siddharth.pritam@snu.edu.in}{https://orcid.org/0000-0001-5673-0406)}{}
\authorrunning{M. Glisse and S. Pritam}
\date{November 2021}
\keywords{edge collapse, flag complex, graph, persistent homology}
\begin{document}
\algnewcommand{\algorithmicgoto}{\textbf{go to}}%
\algnewcommand{\Goto}[1]{\algorithmicgoto~\ref{#1}}%

\maketitle

\begin{abstract}
  Boissonnat and Pritam introduced an algorithm to reduce a filtration of flag (or clique) complexes, which can in particular speed up the computation of its persistent homology~\cite{FlagCompEdgeColl}. They used so-called \emph{edge collapse} to reduce the input flag filtration and their reduction method required only the $1$-skeleton of the filtration. In this paper we revisit the usage of edge collapse for efficient computation of persistent homology. We first give a simple and intuitive explanation of the principles underlying that algorithm. This in turn allows us to propose various extensions including a zigzag filtration simplification algorithm. We finally show some experiments to better understand how it behaves.
\end{abstract}
\section{Introduction}

Efficient computation of persistent homology has been a central quest in Topological Data Analysis (TDA) since the early days of the field about 20 years ago. Given a filtration (a nested sequence of simplicial complexes), computation of persistent homology involves reduction of a boundary matrix, whose rows and columns are the simplices of the input filtration. Traditionally, there are two complementary lines of research that have been explored to improve the computation of persistent homology. The first approach led to improvement of the persistence algorithm (the boundary matrix reduction algorithm) and of its analysis, to efficient implementations and optimizations, and to a new generation of software~\cite{Gudhi,phat,ripser,eirene,dionysus,ripser++,dory}. The second and complementary approach is to reduce (or simplify) the input filtration to a smaller filtration through various geometric or topological techniques in an exact or approximate way and then compute the persistent homology of the smaller reduced filtration. This research direction has been intensively explored as well~\cite{Mischaikow, PawelSimpleColl, ChazalOudot,Botnan, SheehyRipsComp, KerberSharath, Aruni, Simba}.

Flag complexes and, in particular, the Vietoris-Rips complexes are an important class of simplicial complexes that are extensively used in TDA. Flag complexes are fully characterized by their graph (or 1-skeleton) and can thus be stored in a very compact way. Therefore, they are of great practical importance and are well studied theoretically. Various efficient codes and reduction techniques have been developed for those complexes~\cite{ripser, SheehyRipsComp,ripser++,dory}. However, further progress have been made only recently by the work of Boissonnat and Pritam~\cite{FlagCompStrongColl,FlagCompEdgeColl}. Both works~\cite{FlagCompStrongColl, FlagCompEdgeColl} put forward preprocessing techniques, which reduce an input flag filtration (nested sequence of flag complexes) to a smaller flag filtration using only the $1$-skeleton. The work in~\cite{FlagCompStrongColl} uses a special type of collapse called strong collapse (removal of special vertices called dominated vertices), introduced by J. Barmak and E. Miniam~\cite{StrongHomotopy}.  In~\cite{FlagCompEdgeColl} they extend the notion of strong collapse to edge collapse (removal of special edges, called dominated edges) and use it for further filtration simplification which  improves the performance by several orders of magnitude.

In this paper, we revisit the usage of edge collapse for efficient computation of persistent homology. We first give a simple and intuitive explanation of the principles underlying the algorithm proposed in~\cite{FlagCompEdgeColl}. We identify that an algorithm to edge collapse a filtration can be deconstructed as three fundamental operations: 1. \textit{Swap} two edges having same filtration value, 2. \textit{Shift} a dominated edge forward in the filtration and 3. \textit{Trim} the very last dominated edge. This new approach allows us to propose various extensions, which we list below.

\begin{itemize}
    
    \item \textbf{Backward:} We propose a backward reduction algorithm, which processes the edges of a flag filtration with decreasing filtration values different to the algorithm in~\cite{FlagCompEdgeColl}. The algorithm in~\cite{FlagCompEdgeColl} processes edges one by one with increasing filtration values, i.e. in the forward direction. The backward processing results (shown experimentally) in faster reduction of the edges as it allows various operations like domination checks, computing the neighbourhood of an edge etc to be performed fewer times than in the forward algorithm of~\cite{FlagCompEdgeColl}. However, the forward algorithm of~\cite{FlagCompEdgeColl} has the advantage when the input filtration is in streaming fashion. Once we identify that to swap, to shift and to trim are the most basic operations of the reduction algorithm in~\cite{FlagCompEdgeColl}, it becomes clear that there could be possibly several different ways to reduce an input flag filtration using edge collapse. The forward algorithm of~\cite{FlagCompEdgeColl} and the backward algorithm proposed in this article are two natural variants of possibly several different variants one can think of.

    \item \textbf{Parallel: }We propose a divide and conquer heuristic to further improve and semi-parallelize our backward reduction algorithm. Our approach is to subdivide the input filtration into two smaller sub-sequences (consisting of consecutive edges), we process these smaller sub-sequences in parallel and then merge the solutions of two sequences to form the solution of the complete sequence. The two sub-sequences can be further sub-divided and processed recursively in parallel. 

    \item \textbf{Approximate: }With this simplified perspective a simple tweak in the backward algorithm allows us to have an approximate version of the reduction algorithm. There are two goals in mind behind an approximate version, first to speed up the algorithm, and second to obtain a smaller reduced sequence. We perform certain experiments to show how the approximate version performs on these two parameters. 
    
    \item\textbf{ Zigzag: }Next, we provide a reduction algorithm for a zigzag flag filtration, which is a sequence of flag complexes linked through inclusion maps however the inclusion maps could possibly be in both forward and backward directions.  The theoretical results in~\cite{FlagCompEdgeColl} can easily be extended to zigzag filtrations. We show that with the new point of view there is a simple algorithm for zigzag flag filtrations which incorporates parallelism as well.
\end{itemize}

We note that we don't assume that all the vertices appear in the beginning of the filtration. That is the filtration values of vertices can be arbitrary as well.

 On the theory side, we show that the edge collapse of a flag filtration can be computed in time $O(n_e\, k^3)$, where $n_e$ is the number of input edges and $k$ is the maximal degree of a vertex in the input graph. The algorithm  has been implemented and the code is available in the Gudhi library~\cite{Gudhi}.

An outline of this paper is as follows. Section~\ref{sec:prel} recalls some basic ideas and constructions related to simplicial complexes, persistent homology and collapses. We present the new simplified perspective and associated lemmas in Section~\ref{sec:shift_swap_trim}. In Section~\ref{sec:persistence_simplification}, we explain the new backward algorithm for flag filtration simplification. In Section~\ref{sec:parallel} and Section~\ref{sec:approximate}, we discuss the approach to parallel simplification and approximate computation respectively using edge collapse. The simplification algorithm for zigzag flag filtration is discussed in Section~\ref{sec:zigzag_filt}. Experiments are discussed in Section~\ref{sec:experiments}.

\section{Background}
\label{sec:prel}
In this Section, we briefly recall the basic notions like simplicial complexes, flag complexes, persistent homology and edge collapse. For more details on these topics please refer to~\cite{HerbertHarerbook, Hatcher, Munkres}.

\subparagraph{Simplicial complex and simplicial map.}

An \textbf{abstract simplicial complex} $\textit{K}$ is a collection of subsets of a non-empty finite set $\textit{X},$ such that for every subset $\textit{A}$ in $\textit{K}$, all the subsets of $\textit{A}$ are in $\textit{K}$. We call an \textit{abstract simplicial complex} simply a \textit{simplicial complex} or just a \textit{complex}. An element of $\textit{K}$ is called a \textbf{simplex}. An element of cardinality $k+1$ is called a $k$-simplex and 
$k$ is called its \textbf{dimension}.
Given a simplicial complex $K$, we denote its geometric realization as $|K|$.
A simplex is called \textbf{maximal} if it is not a proper subset of any other simplex in $\textit{K}$. A sub-collection  $\textit{L}$ of $\textit{K}$ is called a \textbf{subcomplex} if it is a simplicial complex itself.
An inclusion $\psi : K \xhookrightarrow{\sigma} K \cup \sigma$ of a single simplex $\sigma$ is called \textbf{elementary}, otherwise, it's called \textbf{non-elementary}. 
An inclusion $\psi : K \hookrightarrow L$ between two complexes $K$ and $L$ induces a continuous map $|\psi| : |K| \rightarrow |L|$ between the underlying geometric realizations. 

\subparagraph{Flag complex and neighborhood. } A complex $K$ is a \textbf{flag} or a \textbf{clique} complex if, when a subset of its vertices forms a clique (i.e.\ any pair of vertices is joined by an edge), they span a simplex. It follows that  the full structure of  $K$ is determined by its 1-skeleton (or graph) we denote by $G$.  For a vertex $v$ in $G$, the \textbf{open neighborhood} $N_G(v)$ of $v$ in $G$ is defined as  $N_G(v) := \{u \in G \: | \; [uv] \in E\}$, where $E$ is the set of edges of $G$. The \textbf{closed neighborhood} $N_G[v]$ is $N_G[v] := N_G(v) \cup \{ v\}$. 
Similarly we define the closed and open neighborhood of an edge $[xy] \in E$, $N_G[xy]$ and $N_G(xy)$ as $N_G[xy] := N_G[x] \cap N_G[y]$ and $N_G(xy) := N_G(x) \cap N_G(y)$, respectively.

\subparagraph{Persistent homology.}
A \textbf{sequence} of simplicial complexes $\mathcal{F}$ : $\{K_1 \hookrightarrow K_2 \hookrightarrow  \cdots \hookrightarrow  K_m \}$ connected through inclusion maps is called a \textbf{filtration}.
A filtration is a \textbf{flag filtration} if all the simplicial complexes $K_i$ are flag complexes.

If we compute the homology groups of all the $K_i$, we get the sequence $\mathcal{P}(\mathcal{F})$ : $\{H_p(K_1) \xhookrightarrow{*} H_p(K_2) \xhookrightarrow{*} \cdots \xhookrightarrow{*} H_p(K_m)\}$. Here $H_p()$ denotes the homology group of dimension $p$ with coefficients from a field $\mathbb{F}$ and $\xhookrightarrow{*}$ is the homomorphism induced by the inclusion map. $\mathcal{P}(\mathcal{F})$ is a sequence of vector spaces connected through the homomorphisms and it is called a \textbf{persistence module}. More formally, a \textit{persistence  module} $\mathbb{V}$ is a sequence of vector spaces $\{V_1 \xrightarrow{} V_2 \xrightarrow{} V_3 \xrightarrow{} \cdots \xrightarrow{} V_m\}$ connected with homomorphisms $\{\xrightarrow{}\}$ between them. A  persistence module arising from a sequence of simplicial complexes captures the evolution of the topology of the sequence. 

Any persistence module can be \textit{decomposed} into a collection of intervals of the form $[i,j)$~\cite{structure-pers}.
The multiset of all the intervals $[i, j)$ in this decomposition is called the \textbf{persistence diagram} of the persistence module. An interval of the form $[i,j)$ in the persistence diagram of $\mathcal{P}(\mathcal{F})$ corresponds to a homological feature (a `cycle') which appeared at $i$ and disappeared at $j$. The persistence diagram (PD) completely characterizes the persistence module, that is, there is a bijective correspondence between the PD and the equivalence class of the persistence module \cite{HerbertHarerbook, CarlssonZomorodian}.

Two different persistence modules $\mathbb{V} : \{V_1 \xrightarrow{} V_2 \xrightarrow{} \cdots \xrightarrow{} V_m\}$ and $\mathbb{W} : \{W_1 \xrightarrow{} W_2 \xrightarrow{} \cdots \xrightarrow{} W_m\}$, connected through a set of homomorphisms $\phi_i: V_i \rightarrow W_i$ are \textbf{equivalent} if the $\phi_i$ are isomorphisms and the following diagram commutes ~\cite{HerbertHarerbook, quivers}. Equivalent persistence modules  have the same interval decomposition, hence the same diagram.

  \begin{center}
   \begin{tikzcd}
  V_1 \arrow{r}{} \arrow{d}{\phi_1} & V_2 \arrow{r}{} \arrow{d}{\phi_2} & 
   \cdots &  \arrow{r} & V_{m-1} \arrow{r}{} \arrow{d}{\phi_{m-1}} & V_m \arrow{d}{\phi_m} \\
   W_1 \arrow{r}{} & W_2 \arrow{r}{} & \cdots & \arrow{r} & W_{m-1} \arrow{r}{} & W_m
   \end{tikzcd}
   \end{center}

\subparagraph{Edge collapse of a flag complex:}
In a flag complex $K$, we say that an edge $e =[ab]$, connecting vertices $a$ and $b$, is \textbf{dominated} by a vertex $v$ (different from $a$ and $b$) if $N_G[e]\subseteq N_G[v]$.
Removing $e$ and all its cofaces from $K$ defines a smaller flag complex $K^{\prime}$. It has been proven in~\cite{FlagCompEdgeColl} that when $e$ is dominated by a vertex of $K$, the inclusion $K^{\prime} \subset K$ induces an isomorphism between the homology groups of $K^{\prime}$ and $K$. This removal is called an \textbf{edge collapse}.

\section{Swapping, shifting and trimming} \label{sec:shift_swap_trim}

In this Section, we show three simple and fundamental operations that preserve the persistence diagram of a flag filtration:
1. Swapping any two edges with the same filtration value, 2. Shifting a dominated edge, and 3. Trimming a dominated edge at the end of the filtration. These operations can be combined to simplify a flag filtration. 

Before we proceed, we will fix some notations. Let $\{t_1, t_2, \cdots, t_n\}$ be a finite index set where $t_i \in \mathbb{R}$ and $t_i < t_j$ for $i < j$. For convenience, we may consider $t_{n+1} = \infty$. With each $t_i $ (called the \textit{filtration value} or \textit{grade}) we associate a graph $G_{t_i}$ 
such that $G_{t_i} \hookrightarrow G_{t_{i+1}}$ is an \textit{inclusion}, (not necessarily elementary) of edges. The flag complex of $G_{t_i}$ is denoted as $\overline{G}_{t_i}$ and we consider the associated flag filtration $\mathcal{F} : \overline{G}_{t_1} \hookrightarrow \overline{G}_{t_2} \hookrightarrow \cdots \hookrightarrow \overline{G}_{t_n}$. The edges in the set $E := \{e_{1}, e_{2}, \cdots e_{m} \}$ ($m \geq n$) are thus indexed with an order compatible with the filtration values.

\subparagraph{Swapping:} Inserting several edges at the same filtration value can be done in any order. We state this basic observation as the following lemma.

\begin{lemma} [Swapping Lemma] \label{lemma:swap}
Given a flag filtration $\{\overline{G}_{t_1} \hookrightarrow \overline{G}_{t_2} \cdots \hookrightarrow \overline{G}_{t_n}\}$, such that  ${G_{t_{i}}} \hookrightarrow {G_{t_{i+1}}}$ is a non-elementary inclusion. Then, the indices of the edges ${G_{t_{i+1}}} \setminus {G_{t_{i}}}$ could be assigned interchangeably. That is, swapping their order of insertion preserves the persistence diagram. 
\end{lemma}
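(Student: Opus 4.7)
The plan is to argue that the swapping operation does not actually change the filtration $\mathcal{F}$ as a diagram of simplicial complexes and inclusions, so the invariance of the persistence diagram is immediate.

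First, I would unpack what is meant by the ``order of insertion'' of edges inside a non-elementary inclusion. The edges $e_j \in G_{t_{i+1}} \setminus G_{t_i}$ all carry the same filtration value $t_{i+1}$, so their individual indices $j$ are a bookkeeping choice: any relabeling is still compatible with the filtration values because $t_i < t_{i+1}$ strictly, while inside the step all the edges share the grade $t_{i+1}$. Therefore the only thing the indices control is a potential refinement of the one-step inclusion $\overline{G}_{t_i} \hookrightarrow \overline{G}_{t_{i+1}}$ into a chain of elementary insertions, and that refinement is performed at the single grade $t_{i+1}$.

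Next, I would recall from Section~\ref{sec:prel} that the persistence diagram is defined from the persistence module $\mathcal{P}(\mathcal{F}) = \{H_p(\overline{G}_{t_1}) \to \cdots \to H_p(\overline{G}_{t_n})\}$, where the arrows are induced by the set-theoretic inclusions of the flag complexes. Neither the vector spaces $H_p(\overline{G}_{t_j})$ nor the maps between them depend on an internal ordering of the simplices added from step to step. Consequently, any two orderings of the edges in $G_{t_{i+1}} \setminus G_{t_i}$ produce exactly the same persistence module, and hence the same persistence diagram by the bijective correspondence between persistence modules (up to equivalence) and persistence diagrams.

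The only mild subtlety is to confirm that, should one choose to refine the non-elementary inclusion into a chain of elementary ones, the resulting refined persistence module is also independent of the chosen order. This is where I expect the bulk of the (very short) argument to live: one exhibits, for any two orderings of the edges, an identity isomorphism between the two refined modules at the grade $t_{i+1}$, since both refinements begin at $\overline{G}_{t_i}$ and end at $\overline{G}_{t_{i+1}}$ and the resulting diagram of inclusions commutes trivially. After collapsing the interior of the refinement back to a single step at grade $t_{i+1}$, the two modules coincide, completing the proof. No genuine obstacle arises; the statement is essentially a restatement of the fact that the persistence diagram is an invariant of the filtration as a functor indexed by the grades $t_j$.
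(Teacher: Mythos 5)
Your main observation --- that the grade-indexed filtration $\overline{G}_{t_1}\hookrightarrow\cdots\hookrightarrow\overline{G}_{t_n}$, and hence the persistence module $\mathcal{P}(\mathcal{F})$, does not depend at all on how the edges of $G_{t_{i+1}}\setminus G_{t_i}$ are indexed internally --- is exactly the content of the paper's one-line justification (``Inserting several edges at the same filtration value can be done in any order''), and it does prove the lemma as literally stated. So the core of your proposal is correct and takes the same route as the paper.

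However, the ``mild subtlety'' you flag is where the lemma actually earns its keep (the algorithm and the standard persistence computation both work on the refinement into elementary insertions), and your treatment of it contains a genuine error. You claim that for any two orderings of the edges one can exhibit ``an identity isomorphism between the two refined modules'' and that ``the resulting diagram of inclusions commutes trivially.'' This is false at the intermediate indices: if the edges $e$ and $e'$ are inserted in the two possible orders, the intermediate complexes are $\overline{G_{t_i}\cup e}$ and $\overline{G_{t_i}\cup e'}$, which are different complexes (neither contains the other in general), so there is no identity map between the corresponding homology groups and the two refined modules need not be isomorphic. Indeed their interval decompositions can genuinely differ. The correct argument, which is the one the paper gives for the combinatorial version (\cref{lemma:c_swap} in \cref{sec:combi_shift}), is that any discrepancy between the two refined decompositions is confined to interval endpoints moving among indices that all carry the same filtration value $t_{i+1}$, or to intervals $[i,i+1)$ with $t_i=t_{i+1}$ appearing or disappearing; all of these map to diagonal points $(t_{i+1},t_{i+1})$, which are removed from the persistence diagram by definition. (The paper justifies this via the transposition analysis of vineyards~\cite{vineyards}.) Your final sentence --- that after collapsing the refinement back to a single step the modules coincide --- is the correct conclusion, but it is reached by the endpoint comparison just described, not by an isomorphism of the refined modules.
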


\subparagraph{Shifting:} In a filtration, insertion of a dominated edge does not bring immediate topological change. Therefore, its insertion can be shifted until the next grade and possibly even further. 

\begin{lemma} [Shifting Lemma] \label{lemma:shift}
Let $e$ be a dominated edge in $G_{t_i}$ inserted at grade $t_i$. Then, the insertion of $e$ can be shifted by one grade to $t_{i+1}$ without changing the persistence diagram. In other words, the persistence diagrams of the original flag filtration $\mathcal{F} := \{\overline{G}_{t_1} \hookrightarrow \cdots \hookrightarrow \overline{G}_{t_i} \hookrightarrow \overline{G}_{t_{i+1}} \hookrightarrow \cdots \hookrightarrow \overline{G}_{t_n}\}$ and the shifted filtration $\{\overline{G}_{t_1} \hookrightarrow \cdots \hookrightarrow \red{\overline{{G}_{t_{i}} \setminus e}} \hookrightarrow \overline{G}_{t_{i+1}} \hookrightarrow \cdots \hookrightarrow \overline{G}_{t_n}\}$  are equivalent.
\end{lemma}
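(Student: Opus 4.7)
The plan is to prove the lemma by exhibiting an equivalence of persistence modules, using the definition (commutative diagram with vertical isomorphisms) recalled in Section~\ref{sec:prel}. The two filtrations agree at every grade $t_j$ for $j \neq i$, so the only non-trivial vertical map will be at position $i$.

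First I would check that the shifted sequence is a well-defined filtration. Since the edge $e$ is inserted at grade $t_i$, we have $e \notin G_{t_{i-1}}$, which gives $G_{t_{i-1}} \subseteq G_{t_i}\setminus e$, and hence the inclusion $\overline{G}_{t_{i-1}} \hookrightarrow \overline{G_{t_i}\setminus e}$ of flag complexes; on the other side $G_{t_i}\setminus e \subseteq G_{t_i} \subseteq G_{t_{i+1}}$ gives the inclusion into $\overline{G}_{t_{i+1}}$.

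Next, apply $H_p$ to both filtrations and define the vertical maps $\phi_j$ between the two persistence modules. For $j \neq i$, take $\phi_j$ to be the identity on $H_p(\overline{G}_{t_j})$; for $j = i$, take $\phi_i : H_p(\overline{G_{t_i}\setminus e}) \to H_p(\overline{G}_{t_i})$ to be the map induced by the inclusion $\overline{G_{t_i}\setminus e} \hookrightarrow \overline{G}_{t_i}$. Because $e$ is dominated in $G_{t_i}$, the edge collapse result recalled at the end of Section~\ref{sec:prel} (from~\cite{FlagCompEdgeColl}) states precisely that this inclusion induces an isomorphism on homology, so $\phi_i$ is an isomorphism; all the other $\phi_j$ are obviously isomorphisms.

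It then remains to verify that the diagram commutes. Every square away from column $i$ commutes trivially because both horizontal maps are identical (the two filtrations coincide there) and the vertical maps are identities. For the two squares adjacent to column $i$, commutativity follows from functoriality of $H_p$ together with the simplicial identities $(\overline{G}_{t_{i-1}} \hookrightarrow \overline{G_{t_i}\setminus e} \hookrightarrow \overline{G}_{t_i}) = (\overline{G}_{t_{i-1}} \hookrightarrow \overline{G}_{t_i})$ and $(\overline{G_{t_i}\setminus e} \hookrightarrow \overline{G}_{t_i} \hookrightarrow \overline{G}_{t_{i+1}}) = (\overline{G_{t_i}\setminus e} \hookrightarrow \overline{G}_{t_{i+1}})$, which hold because all arrows are set-theoretic inclusions. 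By the definition of equivalence of persistence modules, the two modules then share the same interval decomposition and hence the same persistence diagram. The only real content is the edge collapse result giving that $\phi_i$ is an isomorphism; everything else is bookkeeping, so I expect no serious obstacle beyond stating the commutative squares cleanly.
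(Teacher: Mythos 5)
Your proof is correct and follows essentially the same argument as the paper: a commutative ladder of inclusion-induced maps whose only non-identity vertical arrow sits at grade $t_i$ and is an isomorphism by the edge collapse result of~\cite{FlagCompEdgeColl}. The only cosmetic difference is that you orient the vertical isomorphism as the inclusion $H_p(\overline{G_{t_i}\setminus e}) \to H_p(\overline{G}_{t_i})$ rather than the retraction-induced map in the opposite direction, which if anything makes the commutativity check more immediate.
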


\begin{proof}

The proof follows from the commutativity of the following diagram, where all maps are induced by inclusions, and the fact that all vertical maps are isomorphisms. 
\begin{center}
   \begin{tikzcd}
   {H_p(\overline{G}_{t_{i-1}})} \arrow[r, hook] \arrow[d, equal] & {H_p(\overline{G}_{t_i})} \arrow[r, hook] \arrow[d, "{|r_i|*}", shift left=1.5ex] & {H_p(\overline{G}_{t_{i+1}})} \arrow[d, equal] \\
   {H_p(\overline{G}_{t_{i-1}})} \arrow[r, hook] & {H_p(\overline{{G}_{t_{i}} \setminus e})} \arrow[r, hook] \arrow[u, hook] %\arrow[l, red, "{|f^{sd}|}" blue]
   & {H_p(\overline{G}_{t_{i+1}})} 
   \end{tikzcd}
   \end{center}

This implies that the persistence diagrams of the sequences $\{\overline{G}_{t_1} \hookrightarrow \cdots \hookrightarrow \overline{G}_{t_i} \hookrightarrow \overline{G}_{t_{i+1}} \hookrightarrow \cdots \hookrightarrow \overline{G}_{t_n}\}$ and $\{\overline{G}_{t_1} \hookrightarrow \cdots \hookrightarrow \overline{{G}_{t_{i}} \setminus e} \hookrightarrow \overline{G}_{t_{i+1}} \hookrightarrow \cdots \hookrightarrow \overline{G}_{t_n}\}$ are equivalent, see~\cite[Theorem 4] {FlagCompEdgeColl} for more details. Here, $|r_i|*$ is the isomorphism between the homology groups induced by the retraction map (on the geometric realizations of the complexes) associated to the edge collapse.
\end{proof}

After an edge is shifted to grade $t_{i+1}$, it can leap frog the edges inserted at grade $t_{i+1}$ using the swapping lemma (Lemma~\ref{lemma:swap}) and can explore the possibility of being shifted to the next grade.

\subparagraph{Trimming:} If the very last edge in the filtration is dominated then we can omit its inclusion. This is a special case of the shifting operation (Lemma~\ref{lemma:shift}) assuming that there is a graph $G_{\infty}$ at infinity.
\begin{lemma} [Trimming Lemma]\label{lemma:trimming}
Let $e\notin G_{t_{n-1}}$ be a dominated edge in the graph $G_{t_n}$. Then, the persistence diagrams of the original sequence $\mathcal{F} := \{\overline{G}_{t_1} \hookrightarrow \overline{G}_{t_2} \hookrightarrow \cdots \hookrightarrow \overline{G}_{t_n}\}$ and the trimmed sequence $\{\overline{G}_{t_1} \hookrightarrow \overline{G}_{t_2} \hookrightarrow  \cdots \hookrightarrow \overline{{G}_{t_{n}} \setminus e}\}$ are equivalent.
\end{lemma}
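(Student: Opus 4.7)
The plan is to mirror the proof of the Shifting Lemma (Lemma~\ref{lemma:shift}), specialized to the very last stage of the filtration. Since $e$ is dominated in $G_{t_n}$, the edge collapse property recalled in Section~\ref{sec:prel} ensures that the inclusion $\overline{G_{t_n} \setminus e} \hookrightarrow \overline{G}_{t_n}$ induces an isomorphism on homology in every dimension; write $|r|*$ for its inverse, induced by the deformation retraction associated with the collapse.

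The central step is to exhibit a ladder of isomorphisms between the two persistence modules. Concretely, for each dimension $p$, the vertical maps are the identity at indices $1, \ldots, n-1$ and $|r|*$ at index $n$. All squares to the left are trivially commutative, so the only one that needs attention is the final one, which takes the form
\begin{center}
\begin{tikzcd}
{H_p(\overline{G}_{t_{n-1}})} \arrow[r, hook] \arrow[d, equal] & {H_p(\overline{G}_{t_n})} \arrow[d, "{|r|*}", shift left=1.5ex] \\
{H_p(\overline{G}_{t_{n-1}})} \arrow[r, hook] & {H_p(\overline{G_{t_n} \setminus e})} \arrow[u, hook]
\end{tikzcd}
\end{center}
I would verify commutativity using that $e \notin G_{t_{n-1}}$, so the inclusion $\overline{G}_{t_{n-1}} \hookrightarrow \overline{G}_{t_n}$ factors through $\overline{G_{t_n} \setminus e}$, together with the fact that $|r|*$ inverts the inclusion $\overline{G_{t_n} \setminus e} \hookrightarrow \overline{G}_{t_n}$ on homology.

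Alternatively, and even more cheaply, one may deduce the statement as a direct corollary of the Shifting Lemma by formally appending $\overline{G}_{t_{n+1}} := \overline{G}_{t_n}$ at $t_{n+1} = \infty$ to the filtration, shifting $e$ from $t_n$ to $t_{n+1}$ via Lemma~\ref{lemma:shift}, and then truncating at index $n$. I do not expect any genuine obstacle: the only point requiring care is the commutativity of the rightmost square, after which equivalence of the two persistence modules, and thus equality of their persistence diagrams, follows from the standard criterion recalled earlier in this section.
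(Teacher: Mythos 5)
Your proof is correct and matches the paper, which gives no separate argument for this lemma beyond remarking that it is a special case of the Shifting Lemma with a graph $G_{\infty}$ appended at infinity --- exactly your second, ``cheaper'' route. Your explicit two-row ladder with the identity at indices $1,\dots,n-1$ and $|r|*$ at index $n$ is just the direct unpacking of that remark (it is the final square of the Shifting Lemma's own diagram), so there is nothing genuinely different here.
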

Note that when shifting or trimming produces a sequence with identical consecutive graphs $G_{t_i}=G_{t_{i+1}}$, we can just drop index $t_{i+1}$.
\begin{comment}
The following lemma from~\cite{FlagCompEdgeColl} justifies the fact that during the forward domination check (the internal forward loop (Line 12) of \cref{alg:core_flag_filtration}) we only include the edges from the edge-neighbourhood of an edge whose domination is being checked.
\end{comment}
\begin{lemma}[Adjacency] \label{lemma:nbd_domination}
 Let $e$ be an edge in a graph $G$ and let $e^\prime$ be a new edge with $G^\prime := G \cup e^\prime$.
 If $N_{G}(e)=N_{G'}(e)$ and $e$ is dominated in $G$, then $e$ is also dominated in $G'$.
\end{lemma}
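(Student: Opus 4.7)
The plan is to show that the \emph{same} dominating vertex witnessing that $e$ is dominated in $G$ continues to dominate $e$ in $G'$. Write $e=[ab]$, and let $v \neq a,b$ be a vertex such that $N_G[e] \subseteq N_G[v]$, which exists by hypothesis. I want to establish $N_{G'}[e] \subseteq N_{G'}[v]$.

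First I would unfold the closed neighborhood of $e$ in both graphs. Since $e \in G \subseteq G'$, the endpoints of $e$ are the same in both graphs, so $N_{G'}[e] = N_{G'}(e) \cup \{a,b\}$ and similarly $N_G[e] = N_G(e) \cup \{a,b\}$. The assumption $N_G(e) = N_{G'}(e)$ then gives the key identity
\begin{equation*}
N_{G'}[e] \;=\; N_G[e].
\end{equation*}
Intuitively, this says that the extra edge $e'$ does not create any new vertex in the common neighborhood of $a$ and $b$, so the set of vertices forming a triangle with $e$ is unchanged.

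Next I would note the monotonicity of neighborhoods under graph inclusion: since $G \subseteq G'$, every edge of $G$ incident to $v$ is also an edge of $G'$, so $N_G(v) \subseteq N_{G'}(v)$, and hence $N_G[v] \subseteq N_{G'}[v]$. Chaining these three facts gives
\begin{equation*}
N_{G'}[e] \;=\; N_G[e] \;\subseteq\; N_G[v] \;\subseteq\; N_{G'}[v],
\end{equation*}
which is exactly the condition that $v$ dominates $e$ in $G'$, completing the proof.

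There is no real obstacle here: the argument is a direct diagram chase on neighborhoods, and the only subtlety is recognizing that the hypothesis $N_G(e) = N_{G'}(e)$ is precisely what is needed to carry the left-hand side of the domination condition across from $G$ to $G'$, while the right-hand side extends automatically by monotonicity. Thus the vertex witnessing domination does not need to be changed.
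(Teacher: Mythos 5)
Your proof is correct: the identity $N_{G'}[e]=N_{G'}(e)\cup\{a,b\}=N_G(e)\cup\{a,b\}=N_G[e]$ together with the monotonicity $N_G[v]\subseteq N_{G'}[v]$ shows the same witness $v$ still dominates $e$ in $G'$. The paper states this lemma without proof, treating it as immediate, and your argument is exactly the intended one.
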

This is in particular the case if $e$ and $e'$
are not boundary edges of a common triangle in $\overline{G^{\prime}}$. The above lemma is not strictly necessary, but it is useful to speed up algorithms.

 In the next Section, we show that one can cook up an algorithm to edge-collapse a flag filtration using these simple ingredients.

\section{Persistence simplification}\label{sec:persistence_simplification}
In this Section, we will describe our new approach to use edge collapse to speed up the persistence computation. As mentioned before, the simplification process will be seen as a combination of the basic operations described in Section~\ref{sec:shift_swap_trim}. This new perspective simplifies the design process and correctness proof of the algorithm. Along with this we achieve a significant improvement in the run-time efficiency as shown in Section~\ref{sec:experiments}. We first briefly look at the forward algorithm of~\cite{FlagCompEdgeColl} with this new point of view and then present the new approach called the \textit{backward algorithm} [Algorithm~\ref{alg:core_flag_filtration}]. Both algorithms take as input a flag filtration $\mathcal{F}$ represented as a sorted array $E$ of edges (pairs of vertices) with their filtration value, and output a similar array $E^c$, sorted in the case the Forward Algorithm and unsorted for the Backward Algorithm, that represents a reduced filtration $\mathcal{F}^c$ that has the same persistence diagram as $\mathcal{F}$.

\subparagraph{Forward algorithm.}
In the forward algorithm (the original one from~\cite{FlagCompEdgeColl}), the edges are processed in the order of the filtration in a streaming fashion. If a new edge is dominated, we skip its insertion and consider the next edge. If the next edge is dominated as well its insertion is skipped as well. Intuitively, the sequence of such dominated edges forms a train of dominated edges that we are moving to the right. When a new edge $e$ is  non-dominated (called \emph{critical}), we output it, and also check what part of the train of dominated edges is allowed to continue to the right (shifted forward) and what part has to stop right there. For all the previously dominated edges (actually only those that are \textit{adjacent} to $e$), we check if they are still dominated after adding the edge $e$. If an edge $e^{\prime}$ becomes critical, we output it with the same filtration value as $e$, and the following edges now have to cross both $e$ and $e^{\prime}$ to remain in the train. We stop after processing the last edge, and the edges that are still part of the dominated train are discarded (trimmed).

\subparagraph{Backward algorithm.}
The backward algorithm (\cref{alg:core_flag_filtration}) considers edges in order of decreasing filtration value. Each edge $e$ is considered once, delayed (shifted) as much as possible, then never touched again. We always implicitly swap edges so that while $e$ is the edge considered, it is the last one inserted at its current filtration value, and compute its domination status there. If the edge is dominated, we shift it to the next filtration value, and iterate, swapping and checking for domination again at this new filtration value. If there is no next filtration value, we remove the edge (trimming). Once the edge is not dominated, we update its filtration value and output it. As an optimization, instead of moving the edge one grade at a time, we may jump directly to the filtration value of the next adjacent edge, since we know that moving across the other edges will preserve the domination (\autoref{lemma:nbd_domination}).

The main datastructure used here is a neighborhood map $N$. For each vertex $u$, it provides a map $N[u]$ from the adjacent vertices $v_i$ to the filtration value $N[u][v_i]$ of edge $uv_i$. The two main uses of this map are computing the neighborhood of an edge $uv$ at a time $t$ (i.e. in the graph $G_t$) as $N_t[uv]=N_t[u]\cap N_t[v]$ (filtering out the edges of filtration value larger than $t$), and checking if such an edge neighborhood is included in the neighborhood of a vertex $w$ at time $t$. While computing $N_t[uv]$, we also get as a side product the list of the future neighbors $F_t[uv]$, i.e.\  the elements of $N_\infty[uv]\setminus N_t[uv]$, which we sort by filtration value. These operations can be done efficiently by keeping the maps sorted, or using hashtables. The information in $N$ is symmetric, any operation we mention on $N[u][v]$ (removal or updating $t$) will also implicitly be done on $N[v][u]$.
\begin{comment}

We first present the underlying principles of the algorithm. As mentioned before, the central idea is to delay the insertion of a dominated edge until it becomes non-dominated. This delay is achieved by increasing the index (filtration value) of the edge to a value when it becomes non-dominated. More specifically, we consider the edges in the decreasing filtration order as we move backward with the filtration value. When considering an edge $e_{i}$, we check whether it is dominated in $G_{t_i}$. If $e_{i}$ is dominated in $G_{t_i}$ then clearly the insertion of $e_{i}$ does not change the topology of $G_{t_{i-1}}$ and $e_{i}$ does not change the persistence diagram. Therefore, we can temporarily put its insertion on hold. Next, we check if the edge $e_{i}$ can be dominated in the graph $G_{t_{i+1}}$. If it is still dominated we still keep its insertion on hold and move to the next graph $G_{t_{i+2}}$. We repeat this process until $e_{i}$ is found to be non-dominated in some graph $G_{t_j}$ for some $j>i$. Suppose $e_{i}$ is found to be non-dominated in the graph $G_{t_j}$, then we set the new index of $e_{i}$ to be $t_j$ and denote it by $e_{i}^{t_j}$. On the other hand, if $e_{i}$ was found to be non-dominated in the graph $G_{t_i}$ at the first place then it is inserted right there by assigning its new index to $t_i$ (i.e. keeping the original index) and is denoted by $e_{i}^{t_i}$.
\end{comment}
In this Section, we denote $t(e)$ the filtration value of $e\in E$, which is stored as $N[u][v]$ if $e=uv$. Note that even though $E$ is sorted, since several edges may have the same filtration value, $G_{t(e)}$ may contain some edges that appear after $e$.

We now explain the precise computation of the reduced sequence of edges $E^c$. See [Algorithm~\ref{alg:core_flag_filtration}] for the pseudo-code. The main \lstinline{for} loop on line 4 (called the backward loop) iterates over the edges in the sequence $E$ by decreasing filtration values, i.e. in the \textit{backward direction},  and checks whether or not the current edge $e$ is dominated in the graph $G_{t(e)}$. If \textit{not},  we insert $e$ in $E^c$ and
keep its original filtration value ${t(e)}$. 
Else, $e$ is dominated in $G_{t(e)}$, and we increase $t(e)$ to the smallest value $t'>t(e)$ where $N_{t(e)}[e]\subsetneq N_{t'}[e]$. We can then iterate (\lstinline{goto} on line 12), check if the edge is still dominated at its new filtration value $t'$, etc. When the edge stops being dominated, we insert it in $E^c$ with its new $t(e)$ and update $t(e)$ in the neighborhood map $N$.
If the smallest value $t'>t(e)$ does not actually exist, we remove the edge from the neighborhood map and do \emph{not} insert it in $E^c$.

\begin{algorithm}[h]
    \caption{Core flag filtration backward algorithm}
    \label{alg:core_flag_filtration}
    \begin{algorithmic}[1] % The number tells where the line numbering should start
        \Procedure{Core-Flag-Filtration}{$E$}
        \State {\bf input :} set of edges $E$ sorted by filtration value
		\State $E^{c} \gets \emptyset$
	     \For{ $e \in E$}  \Comment {In non-increasing order of $t(e)$}
	        \State Compute $N_{{t(e)}}(e)$ and $F_{t(e)}(e)$
	        \For{$w\in N_{{t(e)}}(e)$}\label{line:loop1}
	            \State Test if $w$ dominates $e$ at $t(e)$
			\EndFor
			\If{ $e$ is dominated in $G_{t(e)}$}
			    \If{$F_{t(e)}(e)$ is empty}
			        \State Remove $N[u][v]$ \Comment{Trimming.}
    				\State \Goto{line:end-edge-loop} (next edge)
			    \Else \Comment{Shift and Swap.}
			        \State $t' \gets$ filtration of the first element of $F_{t(e)}(e)$ 
			        \State Move from $F_{t(e)}(e)$ to $N_{{t(e)}}(e)$ the vertices that become neighbors of $e$ at $t'$
			        \State $N[u][v]=t(e) \gets t'$
			        \State \Goto{line:loop1}
			    \EndIf
			\Else
				\State Insert $\{e,{t(e)}\}$ in $E^{c}$
				\State \Goto{line:end-edge-loop} (next edge)
			\EndIf	
		\EndFor\label{line:end-edge-loop}
		
		\State \textbf{return} $E^{c}$ \Comment{$E^{c}$ is the 1-skeleton of the core flag filtration.}
        \EndProcedure
    \end{algorithmic}
\end{algorithm}

\begin{theorem}[Correctness] \label{equivalence_thm}
Let $\mathcal{F}$ be a flag filtration, and $\mathcal{F}^c$ the reduced filtration produced by \cref{alg:core_flag_filtration}.
$\mathcal{F}$ and $\mathcal{F}^c$ have the same persistence diagram.
\end{theorem}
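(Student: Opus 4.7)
The plan is to read the execution of Algorithm~\ref{alg:core_flag_filtration} as a composition of elementary filtration transformations---swaps, shifts, and trims---and invoke the three lemmas of Section~\ref{sec:shift_swap_trim} to conclude that the persistence diagram is preserved at every step. Concretely, I would maintain as a loop invariant that at any point during execution, the neighborhood map $N$ together with the partially built output $E^c$ encodes a flag filtration $\mathcal{F}_{\mathrm{cur}}$ whose persistence diagram equals that of $\mathcal{F}$. Initially $\mathcal{F}_{\mathrm{cur}} = \mathcal{F}$, and at termination $\mathcal{F}_{\mathrm{cur}} = \mathcal{F}^c$, so the invariant yields the theorem.

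The heart of the argument is verifying the invariant across a single iteration of the outer \textbf{for} loop on an edge $e$. Because edges are processed in non-increasing order of their current filtration value $t(e)$ and the algorithm never decreases any value, $e$ is, in $\mathcal{F}_{\mathrm{cur}}$, among the edges inserted at grade $t(e)$. Lemma~\ref{lemma:swap} (Swapping) lets us assume $e$ is the last edge inserted at $t(e)$ without altering the persistence diagram; this is exactly the implicit reordering used by the algorithm when it tests domination of $e$ in $G_{t(e)}$. If $e$ is not dominated, the algorithm records $(e,t(e))$ in $E^c$, freezing $e$ in $\mathcal{F}_{\mathrm{cur}}$, and the invariant trivially holds. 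If $e$ is dominated and $F_{t(e)}(e)$ is nonempty with next relevant grade $t'$, Lemma~\ref{lemma:nbd_domination} (Adjacency) guarantees $e$ remains dominated at every intermediate grade strictly between $t(e)$ and $t'$, since no adjacent edge is inserted there, so iterated application of Lemma~\ref{lemma:shift} (Shifting) pushes $e$'s insertion from $t(e)$ to $t'$; the algorithm's \textbf{go to} then restarts the argument at $t'$. If instead $F_{t(e)}(e)$ is empty, Lemma~\ref{lemma:nbd_domination} shows $e$ stays dominated in every graph $G_s$ with $s \geq t(e)$ in $\mathcal{F}_{\mathrm{cur}}$ (its neighborhood is frozen while the dominator's can only grow), so iterated shifts followed by a single application of Lemma~\ref{lemma:trimming} (Trimming) delete $e$ from $\mathcal{F}_{\mathrm{cur}}$ entirely, matching the trim branch of the algorithm.

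Composing these invariant-preserving transformations over all iterations of the outer loop yields the desired equality of persistence diagrams between $\mathcal{F}$ and $\mathcal{F}^c$. The main obstacle to rigorizing this plan is the bookkeeping step: one must carefully verify that the state of $N$ at the moment $e$ is processed reflects exactly the filtration $\mathcal{F}_{\mathrm{cur}}$, so that the set $N_{t(e)}(e)$ computed by the algorithm really is the neighborhood of $e$ in $\mathcal{F}_{\mathrm{cur}}$ at grade $t(e)$ and the local domination test therefore validates the intended applications of Lemmas~\ref{lemma:shift} and~\ref{lemma:trimming}. Once this correspondence between data structure and filtration is established, the rest of the proof is a direct appeal to the three lemmas together with Lemma~\ref{lemma:nbd_domination} for the multi-grade jumps.
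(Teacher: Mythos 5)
Your proposal is correct and follows essentially the same route as the paper: decompose each iteration into swaps, shifts and trims justified by Lemmas~\ref{lemma:swap}, \ref{lemma:shift} and \ref{lemma:trimming}, with Lemma~\ref{lemma:nbd_domination} covering the multi-grade jump to the first element of $F_{t(e)}(e)$ (and, in the trim branch, the shift to the final grade). The paper states this more briefly as an inductive application of the elementary operations, whereas you phrase it as a loop invariant and explicitly flag the data-structure bookkeeping, but the mathematical content is the same.
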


\begin{proof}
The proof is based on the observation that the algorithm inductively performs the elementary operations from \cref{sec:shift_swap_trim}: either it trims the very last edge of the current sequence (Line~11) or shifts and swaps a dominated edge forward to get a new sequence. Then the result follows using \cref{lemma:shift,lemma:swap,lemma:trimming} inductively. The only subtlety is around Line~15, where instead of simply performing one shift to the next filtration value, we perform a whole sequence of operations. We first shift $e$ to the next filtration value $t'$ (and implicitly swap $e$ with the other edges of filtration value $t'$). As long as we have not reached the first element of $F_{t(e)}(e)$, we know that shifting has not changed the neighborhood of $e$ and thus by \cref{lemma:nbd_domination} the fact that $e$ is dominated. We can then safely keep shifting (and swapping) until we reach that first element of $F_{t(e)}(e)$.
\end{proof}

\subparagraph{Complexity.} We write $n_e$ for the total number of edges and $k$ for the maximum degree of a vertex in $G_{t_n}$.
The main loop of the procedure, Line~4 of \cref{alg:core_flag_filtration}, is executed $n_e$ times. Nested, we loop (in the form of \lstinline{go to 6}) on the elements of $F_{t(e)}(e)$, of which there are at most $k$. For each of those elements, on Line~6, we iterate on $N_{t(e)}(e)$, which has size at most $k$. Finally, testing if a specific vertex dominates a specific edge amounts to checking if one set is included in another, which takes linear time in $k$ for sorted sets or hash tables. The other operations are comparatively of negligible cost. Sorting $F_{t(e)}(e)$ on Line~5 takes time $k\log k=o(k^2)$. Line~15 may take time $k\log k$ depending on the datastructure, $O(k^2)$ in any case. This yields a complexity of $O(n_ek^3)$.

Note that this analysis is pessimistic. If we delay an edge a lot (costly) but end up removing it, it makes future operations cheaper. Non-dominated edges only cost $k^2$. The edges that have many neighbors (usually appear late towards the end and) have few extra adjacent edges left to cross. After shifting (\lstinline{go to 6}), we can cheaply check if the previous dominator is still valid and in many cases skip the costly ($k^2$) full domination check.

\subparagraph{Optimality.}
The sequence produced by the backward (or forward) algorithm may still contain dominated edges, and \cref{sec:exp-complete} shows that it can take several runs of the algorithm before we end up with a fully reduced sequence. While each edge in the output was non-dominated at some point in the algorithm, other edges may later swap with this one and make it dominated again. It would be possible to enhance the algorithm so it performs some specific action when swapping an edge with a neighboring edge, the simplest being to mark it so we know this edge is worth checking in the next run, but one run of the backward algorithm already brings most of the benefit, so we did not concentrate our effort on a full reduction.

Swapping, shifting and trimming may produce many different reduced sequences. There is no reason to believe that our gready approach yields the smallest possible sequence, finding that sequence looks like a hard problem. However, we are mostly interested in finding a small enough sequence, so this is not an issue.

\section{Parallelisation}\label{sec:parallel}
Delaying the insertion of an edge until the next grade, and possibly swapping it, is a very local operation. As such, there is no problem doing several of them in parallel as long as they are in disjoint intervals of filtration values. 
We exploit this observation and further optimize our algorithm by parallelizing a significant part of the computation using a divide and conquer approach. 

To describe the parallel approach, let us use the same notations $t_i$, $G_{t_i}$, $\mathcal{F}$, $G_{\mathcal{F}}$ and $E$ as in Section~\ref{sec:shift_swap_trim}. To make things simpler, we assume that all edges have distinct filtration values. We subdivide the given input edge set $E := \{e_{1}, e_{2}, \cdots e_{n} \}$ of size $n$ into two smaller halves: the left half $E_l := \{e_{1}, e_{2}, \cdots e_{{n/2}} \}$ and the right half $E_r := \{e_{{n/2 + 1}}, e_{{n/2 +2}}, \cdots e_{{n}}\}$ of roughly the same size. We will describe a version of the algorithm based on the backward algorithm, but the same could be done with the forward algorithm, or they could even be mixed.

We first apply the backward algorithm to $E_l$ normally (\emph{left call}), which produces a reduced $E^c_l$. We also remember the list of all edges that were removed in this procedure: $E^r_l:=E_l\setminus E^c_l$. Independently (in parallel), we apply the backward algorithm to $E$ (\emph{right call}), but stop after processing all the edges of $E_r$ on Line~4 of \cref{alg:core_flag_filtration}. In a final sequential merging step, we resume the right call, processing only the edges of $E^r_l$, as if they all had the same initial filtration value $t_{n/2+1}$. The subdivision can obviously be applied recursively to increase the parallelism.

\begin{lemma}
The parallel algorithm produces exactly the same output as the sequential algorithm, and is thus correct.
\end{lemma}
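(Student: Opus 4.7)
The plan is to establish, by comparing the two algorithms phase by phase, that they perform exactly the same local operations (swaps, shifts, trims) on each edge and thereby produce the same output $E^c$. The crucial bookkeeping observation is that a shift can only \emph{increase} an edge's filtration value, so processing the edges of $E_r$ (whose filtrations all exceed $t_{n/2}$) leaves the restriction of the neighborhood map $N$ to filtrations $\leq t_{n/2}$ untouched. Since the backward algorithm processes edges strictly in decreasing order of filtration, the sequential algorithm begins with exactly the same work on $E_r$ as the right call in the parallel algorithm; both being deterministic, they produce identical state after this phase.

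Next I would show that the left call's processing of $E_l$ agrees with the sequential algorithm's processing of $E_l$, up to the moment when some edge would be shifted strictly beyond $t_{n/2}$. For any $e \in E_l$ and any time $t \leq t_{n/2}$, the neighborhood $N_t(e)$ depends only on edges of $E_l$ with filtration $\leq t$, so every domination check and every $F$-based shift within $E_l$ that the left call performs is also performed identically by the sequential algorithm. Consequently $E_l^c$ is precisely the set of $E_l$-edges that the sequential algorithm finalizes at some filtration $\leq t_{n/2}$ (with the same output filtration), and $E_l^r$ is precisely the set of $E_l$-edges that the sequential algorithm would shift past $t_{n/2}$.

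For the merging step, the state of $N$ right before it starts should equal the state of $N$ in the sequential algorithm at the moment it first shifts an $E_l^r$-edge strictly above $t_{n/2}$: both combine the right-call state on entries with filtration $>t_{n/2}$ with the left-call state on $E_l^c$-edges, while all $E_l^r$-edges are still at their original (or equivalent, dominated) filtrations. By iterating the shifting lemma and using \autoref{lemma:nbd_domination}, pushing such a dominated edge from its current filtration $\leq t_{n/2}$ directly to $t_{n/2+1}$ in one jump is a valid sequence of shifts and swaps that does not change the map or the domination status. Hence the merge's convention of restarting each $E_l^r$-edge at $t_{n/2+1}$ is a faithful simulation of the sequential algorithm. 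From this point onward, both algorithms run the unmodified backward procedure on the $E_l^r$-edges (processed in the same decreasing order of original filtration) and therefore produce identical outputs.

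I expect the main obstacle to be this third step: carefully verifying that the ``intra-$E_l^r$'' operations in the merge step really do occur in the same order, with the same intermediate map states, as in the sequential algorithm. In particular, one must check that the $E_l^c$-edges whose filtrations were updated by the left call receive exactly the values the sequential algorithm would have assigned them before any $E_l^r$-edge crosses $t_{n/2}$; this amounts to reordering, via the swapping lemma, the sequential operations so that all below-$t_{n/2}$ work precedes all above-$t_{n/2}$ work. Once this reordering is justified, the rest is a straightforward induction on the number of $E_l^r$-edges processed, and recursion handles further levels of subdivision.
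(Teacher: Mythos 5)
Your proposal is correct and follows essentially the same route as the paper's own (much terser) proof: both argue phase by phase that the right call coincides with the sequential algorithm on $E_r$, that the left call coincides with it on all $E_l$-work at filtrations $\leq t_{n/2}$, and that for each edge the sequential algorithm shifts past $t_{n/2}$ the remaining shifts are reproduced by the merge step. The extra details you flag (shifts only increase filtration values, and positions $\leq t_{n/2}$ are indistinguishable from the merge's viewpoint) are exactly the right observations to flesh out the paper's sketch.
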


\begin{proof}
The right call and the sequential algorithm start by handling the edges of $E_r$ in exactly the same way. When we reach the edges of $E_l$, for each edge $e$, there are two cases. Either the sequential algorithm shifts $e$ no further than $t_{n/2}$, in which case the left call does the same. Or the sequential algorithms shifts $e$ further (possibly all the way to removing it), then shifting to $t_{n/2+1}$ is handled by the left call, while the rest of the shift happens in the merging step.
\end{proof}

\section{Approximation} \label{sec:approximate}
Another interesting extension is an approximate version that gives a diagram within bottleneck distance $\epsilon$ of the true diagram (or some other similar criterion). Since the Rips filtration is often used as an approximation of the \v Cech filtration, an additional error is often acceptable.

If an edge is non-dominated for a short range of filtration values and becomes dominated again afterwards, it is tempting to skip the non-dominated region and keep delaying this edge. However, if we are not careful, the errors caused by these skips may add up and result in a diagram that is far from the original. The simplest idea would be to round all filtration values to the nearest multiple of $\epsilon$ before running the backward algorithm (similarly to~\cite{FlagCompStrongColl}). However, we can do a little better.

We describe here one safe approximation algorithm, based on the backward algorithm.
When considering a new edge $e$, instead of checking if it is dominated at its original position $t(e)$, we start checking $\epsilon$ later, at filtration value $t(e)+\epsilon$. If it is dominated, we resume normal processing from there.
However, if the edge is not dominated $\epsilon$ after its original insertion time, we keep it at its original position, so we don't end up uselessly shifting the whole sequence.
\begin{lemma}
  The resulting module is $\epsilon$-interleaved\footnote{See \cite{structure-pers} for a definition of interleaving.} with the original one.
\end{lemma}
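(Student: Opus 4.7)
The plan is to insert an auxiliary filtration $\mathcal{F}'$ between the input $\mathcal{F}$ and the output $\mathcal{F}^\epsilon$ of the approximate algorithm, and to prove the statement in two steps: first, that $\mathcal{F}$ and $\mathcal{F}'$ are $\epsilon$-interleaved via elementary inclusions of complexes; second, that $\mathcal{F}'$ and $\mathcal{F}^\epsilon$ have identical persistence diagrams via the three basic lemmas of \cref{sec:shift_swap_trim}. We define $\mathcal{F}'$ by setting $t'(e) := t(e)+\epsilon$ whenever the approximate algorithm finds $e$ dominated at $t(e)+\epsilon$ (and hence decides to shift or trim it), and $t'(e) := t(e)$ otherwise, so that by construction $t(e) \leq t'(e) \leq t(e)+\epsilon$ for every edge.

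The first step is immediate from this bound: the inequalities yield inclusions $G'_t \subseteq G_t \subseteq G'_{t+\epsilon}$ for every $t\in\mathbb{R}$, and after applying $H_p$ these produce the two families of natural maps required for an $\epsilon$-interleaving, all compatibility squares commuting automatically because every arrow is induced by an inclusion of simplicial complexes.

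For the second step, we read $\mathcal{F}^\epsilon$ off as the filtration obtained from $\mathcal{F}'$ by replaying exactly the sequence of further shifts and trims that the algorithm performed on the shifted edges, processed in the same decreasing order of $t(e)$. For each shifted edge $e$, starting from its $\mathcal{F}'$-position $t(e)+\epsilon$, we apply \cref{lemma:swap}, \cref{lemma:shift} and \cref{lemma:trimming} along precisely the chain of intermediate positions that the algorithm used. The only nontrivial point is to certify that $e$ is dominated in the replayed filtration at each position $p \geq t(e)+\epsilon$; this reduces to showing that the graph at time $p$ in the replay coincides with the graph in which the algorithm performed its domination test. Edges $e''$ with $t(e'')>t(e)$ already sit at their final positions in both pictures by induction on the processing order, edges with $t(e'')=t(e)$ are reconciled by \cref{lemma:swap}, and for any edge $e'$ with $t(e')<t(e)$ the algorithm uses the original position $t(e')$ while the replay uses $t'(e')\in\{t(e'),t(e')+\epsilon\}$, but both values are strictly less than $t(e)+\epsilon\leq p$, so $e'$ appears in both graphs.

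Composing the $\epsilon$-interleaving between $\mathcal{F}$ and $\mathcal{F}'$ with the isomorphism of persistence modules between $\mathcal{F}'$ and $\mathcal{F}^\epsilon$ gives the claimed $\epsilon$-interleaving between the modules of $\mathcal{F}$ and $\mathcal{F}^\epsilon$. The main obstacle is the graph-matching argument in the second step: since $\mathcal{F}'$ is built only from the binary decision the algorithm makes at $t(e)+\epsilon$, it disagrees with the algorithm's internal state on the positions of edges whose original filtration value lies below $t(e)$, and the whole argument hinges on the fact that any such disagreement is bounded by $\epsilon$ and therefore invisible at times $\geq t(e)+\epsilon$ where the domination checks take place.
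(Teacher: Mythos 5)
Your proof is correct and follows essentially the same route as the paper: you construct the same intermediate filtration (delayed edges shifted by exactly $\epsilon$, kept edges untouched), observe that it is $\epsilon$-interleaved with the original by inclusion, and then argue that the exact algorithm replayed on this intermediate sequence reproduces the approximate algorithm's output with an unchanged diagram. The only difference is that you spell out the induction on processing order and the graph-matching argument behind what the paper states as ``the key observation'' without further justification.
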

\begin{proof}
Consider the set $D$ of edges that are delayed by this algorithm, and $C$ the edges that are kept at their original position.
Starting from the original sequence, we can delay all the edges of $D$ by exactly $\epsilon$. The flag filtration defined by this delayed sequence is obviously $(0,\epsilon)$-interleaved with the original.
We now run the regular backward algorithm on this sequence, with the difference that the edges in $C$ are handled as if they were never dominated. The output filtration has the same persistence diagram as the delayed sequence, which is at distance at most $\epsilon$ from the diagram of the original filtration.

The key observation here is that this procedure, where we first delay some edges then run the exact algorithm, produces the same output as the approximation algorithm.
\end{proof}
%\sid{The last sentence of the proof, ``The key observation...'' is not clear to me. What are the two different approximation algorithm, second and first?}
%\marc{Is it slightly better now?}
Many versions of this can be considered, with the goal to enable more reductions, but one needs to ensure that the $\epsilon$-approximations for two edges cannot combine to make an error larger than $\epsilon$ on the output. In this example, processing the edges from right to left is crucial to guarantee a bounded error, the approximation done with the initial shift from $t(e)$ to $t(e)+\epsilon$ of an edge $e$ cannot affect an already modified persistence interval, since those are after $t(e)+\epsilon$. However, it also limits the optimization opportunities a lot. It could make sense to run the exact simplification algorithm first, and only then make a pass with the approximation algorithm, to avoid "wasting" our approximation budget on unnecessary delays, but that would lose the advantage that the approximation algorithm may be faster than the exact one.
%\rvw{the message here is not very clear to me. In which cases could one lose control of the approximation? Why doesn't it happen here? The authors may want to consider a reformulation of the paragraph}
\section{Zigzag persistence} \label{sec:zigzag_filt}
The filtrations we have discussed so far are increasing sequences of complexes. There exists a more general type of filtration, called zigzag filtration~\cite{CarlssonZigzag,ClementSteveZigzag} $ \mathcal{Z}: K_1 \hookrightarrow  K_2 \hookleftarrow K_3 \hookrightarrow \cdots  \hookrightarrow K_n $. Here consecutive complexes are still related by an inclusion, but the direction of this inclusion may be different for every consecutive pair. In other words, as we move from left to right with increasing indices, the complex $K_i$ is obtained by either \textit{inclusion} of simplices or \textit{removal} of simplices from the previous complex $K_{i-1}$. Persistence diagrams can still be defined for these filtrations. Again, in this paper, we are only interested in flag zigzag filtrations, where each complex is a clique complex. For a flag zigzag filtration the underlying graphs are related through inclusions or removals of edges. We show that edge collapse can again be used for simplification of such sequences.

In the case of standard persistence (explained in Section~\ref{sec:persistence_simplification}) the goal of the simplification process was to shift as many dominated edges as possible towards the end of a filtration and then trim them. For a zigzag flag filtration there are several possible ways to simplify it: 1. If a dominated edge is included and is never removed, then as usual we try to \textit{shift} it towards the end and trim it. 2. If an edge is included and removed both as dominated, then we try to \textit{shift} the inclusion till its removal and then annihilate both operations. 3. If an edge is included as non-dominated but later removed as dominated then we try to shift its removal towards the right till the end or its re-insertion. 
4. A zigzag filtration is symmetric and a removal is an inclusion from the opposite direction, therefore, we can shift dominated removals towards the beginning and perform symmetric operations as in 2.

The 3rd method reduces the number of events at the cost of a slightly bigger complex, which may or may not be preferred over a more ``zigzagy'' filtration, so we do not use it in the default algorithm. 

With more ways to simplify, the simplification process of a zigzag flag filtration is more delicate compared to the usual filtration. And it has some subtleties, first, can we shift a dominated edge inclusion across an edge removal? We show that (in \cref{lemma:zigzag_shift}), a dominated edge $e$ can be shifted across an edge removal if $e$ is also dominated after the edge removal. Resolving the first issue leads us to the question, how to index (order) inclusions and removals of the same grade? In practice, this situation is not common and two complexes at consecutive grades are linked through either inclusions or removals.
Therefore, we adopt the following representation for a zigzag flag filtration.

We will use the same notations $t_i$, $G_{t_i}$, $\overline{G}_{t_i}$ and $E$ as in Section~\ref{sec:shift_swap_trim}.
We represent a zigzag filtration in slightly more general way as $\mathcal{Z} : \overline{G}_{t_1} \hookleftarrow \overline{G}_{t'_1} \hookrightarrow \overline{G}_{t_2} \hookleftarrow \cdots \hookrightarrow \overline{G}_{t_{i-1}} \hookleftarrow \overline{G}_{t_{i-1}^{\prime}} \hookrightarrow
\overline{G}_{t_i} \hookleftarrow \overline{G}_{t_{i}^{\prime}} \hookrightarrow
\overline{G}_{t_{i+1}},\cdots \hookrightarrow \overline{G}_{t_n}$. Here ${G}_{t_{i}^{\prime}}$ is an intermediate graph at grade $t_i$.
In a usual zigzag, $\overline{G}_{t'_{i}}$ is equal to either $\overline{G}_{t_i}$ or $\overline{G}_{t_{i+1}}$ depending on the direction of the arrow. Note that the standard zigzag algorithm still applies to this version.

The following lemma provides a sufficient condition to shift and swap an inclusion with removal.

\begin{lemma} [The Zigzag Shifting-Swapping Lemma] \label{lemma:zigzag_shift}
Let $e$ be an edge inserted at $t_i$, $e \in G_{t'_{i}}$ and dominated in both graphs $G_{t_i}$ and $G_{t'_{i}}$.
Then the persistence diagrams of the original zigzag flag filtration 
$\{\ \cdots \hookleftarrow\overline{G}_{t'_{i-1}}\xhookrightarrow{e} \overline{G}_{t_i} \hookleftarrow \overline{G}_{t_{i}^{\prime}} \hookrightarrow \overline{G}_{t_{i+1}} \hookleftarrow \cdots\ \}$
and the shifted-swapped sequence 
$\{\ \cdots \hookleftarrow\overline{G}_{t'_{i-1}}\hookrightarrow \overline{G_{t_i}\setminus e} \hookleftarrow \overline{G_{t_{i}^{\prime}}\setminus e} \xhookrightarrow{e} \overline{G}_{t_{i+1}} \hookleftarrow \cdots \ \}$
are equivalent. That is, the grade of $e$ can be shifted to $t_{i+1}$. 
\end{lemma}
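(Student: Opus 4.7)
The plan is to mimic the proof of the Shifting Lemma (\cref{lemma:shift}), extending it to handle a pair of consecutive spots in the zigzag where $e$ is present. Concretely, I will exhibit a ladder-style commutative diagram between the two zigzag persistence modules whose vertical maps are all isomorphisms; by the standard interval-decomposition criterion for equivalence of (zigzag) persistence modules, this yields equal persistence diagrams.

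More precisely, I will restrict attention to the portion of the filtration that actually changes, namely the four-term subsequence $\overline{G}_{t'_{i-1}} \hookrightarrow \overline{G}_{t_i} \hookleftarrow \overline{G}_{t'_i} \hookrightarrow \overline{G}_{t_{i+1}}$ (original) versus $\overline{G}_{t'_{i-1}} \hookrightarrow \overline{G_{t_i}\setminus e} \hookleftarrow \overline{G_{t'_i}\setminus e} \hookrightarrow \overline{G}_{t_{i+1}}$ (shifted-swapped). Outside this window the two sequences agree, so it suffices to match them here. I place identity maps on the outer columns (over $\overline{G}_{t'_{i-1}}$, since $e\notin G_{t'_{i-1}}$, and over $\overline{G}_{t_{i+1}}$, which is unchanged). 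For the two middle columns I use the homology isomorphisms induced by the inclusions $\overline{G_{t_i}\setminus e}\hookrightarrow\overline{G}_{t_i}$ and $\overline{G_{t'_i}\setminus e}\hookrightarrow\overline{G}_{t'_i}$; these are isomorphisms by the edge-collapse result recalled in \cref{sec:prel}, using the hypothesis that $e$ is dominated in both $G_{t_i}$ and $G_{t'_i}$.

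The remaining task is to verify that each of the three resulting squares commutes at the level of $H_p$. Two of the three squares (the ones bordering the outer columns) commute for the trivial reason that all four maps are induced by inclusions of simplicial complexes, and the corresponding pair of compositions are induced by the same inclusion at the level of spaces. The middle square, which sits over $\overline{G}_{t_i} \hookleftarrow \overline{G}_{t'_i}$ versus $\overline{G_{t_i}\setminus e} \hookleftarrow \overline{G_{t'_i}\setminus e}$, also commutes at the space level: both inclusions come from the ambient inclusion $G_{t'_i}\hookrightarrow G_{t_i}$, and they respect removal of the common edge $e$.

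The main subtlety, which I expect to be the chief obstacle, is the direction of the arrows. In the ordinary (non-zigzag) shifting proof of \cref{lemma:shift}, the middle vertical arrow was paired with an ``upward'' retraction $|r_i|_*$ to witness invertibility; here we must give an analogous argument at two consecutive middle positions and make sure the retractions used on the $\overline{G}_{t_i}$ and $\overline{G}_{t'_i}$ sides are compatible with the backward inclusion $\overline{G}_{t'_i}\hookrightarrow \overline{G}_{t_i}$. This compatibility follows because both retractions can be chosen as restrictions of the same simplicial retraction on $|\overline{G_{t_i}}|$ that fixes $|\overline{G_{t'_i}}|\setminus\{\text{interior of } e\text{'s cofaces}\}$ and deforms $e$ onto a dominating vertex, using that the dominator of $e$ in $G_{t'_i}$ (which exists by hypothesis) can be chosen compatibly with a dominator in $G_{t_i}$ by shrinking the neighborhood. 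Once commutativity is established, \cref{lemma:swap} handles the implicit swap at grade $t_{i+1}$ between $e$ and any edges already inserted there, giving the claimed equivalence.
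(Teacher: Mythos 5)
Your proof is correct and follows essentially the same route as the paper: the same four-term ladder diagram with identity maps on the outer columns and the inclusion-induced isomorphisms $H_p(\overline{G_{t_i}\setminus e})\to H_p(\overline{G}_{t_i})$ and $H_p(\overline{G_{t'_i}\setminus e})\to H_p(\overline{G}_{t'_i})$ in the middle, with all three squares commuting because every map is induced by an inclusion. The ``subtlety'' you raise at the end about choosing compatible retractions is moot: the retractions only serve to certify that the vertical inclusion-induced maps are isomorphisms and never appear in the commutativity check, so no compatibility between them is needed (which is fortunate, since a dominating vertex in $G_{t'_i}$ need not dominate $e$ in $G_{t_i}$ or vice versa).
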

\begin{proof}
The proof follows through a similar argument as  \cref{lemma:shift}. All three squares in the following diagram commute as all the maps are induced by inclusions. Note that the top left and the bottom right horizontal maps can be induced by the inclusion of more edges than just $e$.

\begin{center}
   \begin{tikzcd}
    H_p(\overline{G}_{t_{i-1}^{\prime}})   \arrow[r, hook, "{|e|^*}"] & H_p(\overline{G}_{t_i}) \arrow[d, "{|rt|*}", shift left=1.5ex] & H_p(\overline{G}_{t_{i}^{\prime}}) \arrow[l, hook, blue ] \arrow[r, hook] \arrow[d, "{|rt|*}", shift left=1.5ex] & H_p(\overline{G}_{t_{i+1}}) \arrow[d, equal] \\
    H_p(\overline{G}_{t_{i-1}^{\prime})} \arrow[u, equal] \arrow[r, hook] &
   H_p(\overline{G_{t_{i}} \setminus e})  \arrow[u, hook, "{|e|^*}"]  
   & H_p(\overline{G_{t_{i}^{\prime}} \setminus e})  \arrow[l, hook, red]  \arrow[u, hook, "{|e|^*}"] \arrow[r, hook ,"{|e|^*}"] & H_p(\overline{G}_{t_{i+1}}) 
   \end{tikzcd}
\end{center}
Since the vertical maps are either equalities or isomorphisms induced by the inclusion of the dominated $e$ ($|rt|$ is the corresponding retraction map associated with the collapse), the result follows immediately. That is, the shift of $e$ to the grade $t_{i+1}$ preserves the diagram. 
\end{proof}

Note that in the above lemma, the hypothesis that the edge $e$ should be dominated in  the graph $G_{t_{i}^\prime}$ is necessary as shown in \cref{fig:zigzag_example}. 

\begin{figure}[H]   
\centering
\includegraphics[scale=.6]{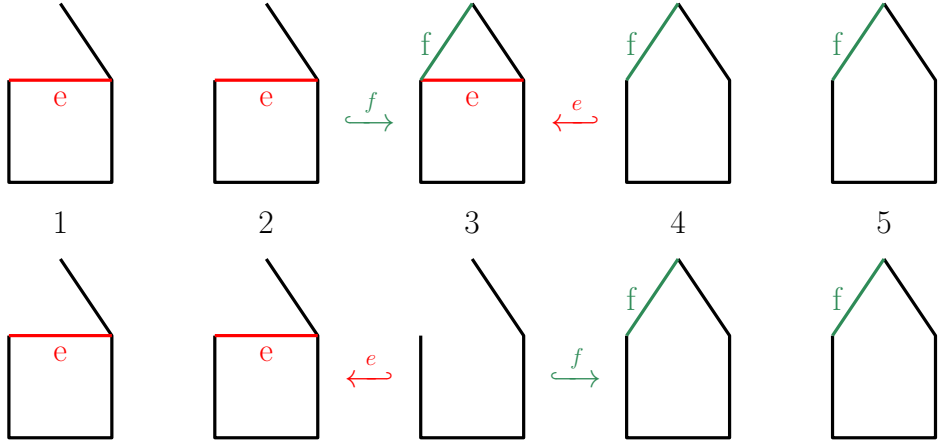}
\caption{In the top sequence, the green edge $f$ is dominated at grade $3$ and non-dominated at grade $4$. Shifting and swapping the inclusion of $f$ with the removal of the red edge $e$ results in the bottom sequence. This results in two different one dimensional persistence diagrams of the associated flag complexes.  For the top sequence it is  $\{[1,5]\}$ and for the bottom $\{[1,2], [4,5]\}$. Note that it is standard to use closed intervals in a zigzag persistence diagram.}
\label{fig:zigzag_example} 
\end{figure}

If a dominated edge is inserted and removed at the same grade, we can cancel both operations. 

\begin{lemma}[The Cancellation Lemma] \label{lemma:zigzag_cancel}
Let $e$ be an edge inserted and removed at $t_i$. If $e$ is dominated in $G_{t_i}$, then the persistence diagrams of the following two sequences 
$\{\ \cdots \hookleftarrow\overline{G}_{t'_{i-1}}\xhookrightarrow{e} \overline{G}_{t_i} \xhookleftarrow{e} \overline{G}_{t_{i}^{\prime}} \hookrightarrow \overline{G}_{t_{i+1}} \hookleftarrow \cdots\ \}$ and
$\{\ \cdots \hookleftarrow\overline{G}_{t'_{i-1}} \hookrightarrow \overline{G_{t_i}\setminus e} \hookleftarrow \overline{G}_{t_{i}^{\prime}} \hookrightarrow \overline{G}_{t_{i+1}} \hookleftarrow \cdots\ \}$ are the same.
\end{lemma}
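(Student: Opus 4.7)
The plan is to mimic the proof strategy of \cref{lemma:shift} and \cref{lemma:zigzag_shift}: construct a commutative ladder of inclusion-induced maps at the homology level in which every vertical map is an identity except at position $t_i$, where it is the isomorphism provided by the edge collapse of the dominated edge $e$ in $\overline{G}_{t_i}$ (as recalled at the end of \cref{sec:prel}).

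First I would check that the modified sequence is well-defined. Since $e$ is inserted at grade $t_i$, it is not in $G_{t_{i-1}'}$; since $e$ is also removed at grade $t_i$, it is not in $G_{t_i'}$ either. Hence $G_{t_{i-1}'}\subseteq G_{t_i}\setminus e$ and $G_{t_i'}\subseteq G_{t_i}\setminus e$, so the middle portion of the modified sequence consists of genuine simplicial inclusions. Moreover, each of these inclusions composes with the edge-collapse inclusion $\overline{G_{t_i}\setminus e}\hookrightarrow \overline{G}_{t_i}$ to yield exactly the corresponding horizontal map of the original sequence.

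Applying $H_p$ then produces, at position $t_i$, two squares: one for the leftward arrow into $\overline{G}_{t_{i-1}'}$ and one for the rightward arrow from $\overline{G}_{t_{i}'}$, with the middle vertical map being $H_p(\overline{G_{t_i}\setminus e})\to H_p(\overline{G}_{t_i})$ induced by the edge-collapse inclusion. Both squares commute because every map in them is inclusion-induced and the relevant simplicial compositions coincide by the previous paragraph. The remaining squares lie entirely outside the changed portion and are trivial identities. Finally, the middle vertical map is an isomorphism by the edge-collapse result quoted from \cite{FlagCompEdgeColl}, since $e$ is dominated in $G_{t_i}$.

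Thus every vertical map in the ladder is an isomorphism, the ladder commutes, and the two zigzag persistence modules are equivalent; consequently they share the same persistence diagram. The only nontrivial point is bookkeeping with the directions of the zigzag arrows and checking that the inclusions factor through $\overline{G_{t_i}\setminus e}$ on both sides; the geometric input (the edge-collapse isomorphism) is already available and no further topology is needed.
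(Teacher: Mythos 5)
Your proof is correct. The paper states \cref{lemma:zigzag_cancel} without an explicit proof, and your argument --- a commutative ladder of inclusion-induced maps whose only non-identity vertical arrow is the edge-collapse isomorphism $H_p(\overline{G_{t_i}\setminus e})\to H_p(\overline{G}_{t_i})$, with commutativity of both adjacent squares coming from the factorizations $G_{t'_{i-1}}\subseteq G_{t_i}\setminus e$ and $G_{t_{i}'}\subseteq G_{t_i}\setminus e$ --- is exactly the pattern the paper uses for \cref{lemma:shift} and \cref{lemma:zigzag_shift}, so it matches the intended approach.
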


\subparagraph{Algorithm: }

The algorithm to simplify $\mathcal{Z} : \overline{G}_{t_1} \hookleftarrow  \cdots 
\overline{G}_{t_i} \hookleftarrow \overline{G}_{t_{i}^{\prime}} \hookrightarrow
\overline{G}_{t_{i+1}},\cdots \hookrightarrow \overline{G}_{t_n}$ is again a combination of swapping, shifting and trimming of a dominated edge. For each edge $e$ in $\mathcal{Z}$ there is a list of pairs $<t, inc>$ associated with it, where $t$ is a grade and \textit{inc} is a Boolean variable to denote whether $e$ is inserted or removed at $t$.
Below, we provide the main steps of the zigzag simplification algorithm. The algorithm first processes all the edge inclusions in decreasing grade order from $t_n$ to $t_1$ and tries to shift them towards the end. After processing the first edge inclusion, it processes all the removals in increasing grade order from $t_1$ to $t_n$ and tries to shift them towards the beginning. This process can be repeated several times until it converges. We use $t(e)$ to denote the current grade of the edge $e$ being considered by the algorithm.

\begin{algorithm}[H]
	\caption{Core zigzag flag filtration algorithm}
	\label{alg:core_zigzag_algorithm}
	\begin{algorithmic}[1] % The number tells where the line numbering should start
		\Procedure{Core-Zigzag-Flag-Filtration}{$E$}
		\ForAll {edge \textit{inclusions}, backward (from $t_n$ to $t_1$)}
		\If {the current edge $e$ is dominated in the graph $G_{t(e)}$} 
		\If {$t{(e)} == t_n$}
		\State trim $e$ (delete the element $<t(e), inc>$).
		\ElsIf{ $G_{t(e)} \neq G_{{t'(e)}}$}  \Comment{the next step is a removal $G_{t(e)}  \hookleftarrow G_{{t'(e)}}$}.
		\If{$e \notin G_{t'(e)}$}
		\State delete the inclusion-removal pair of $e$ at $t(e)$.
		\ElsIf{ $e$ is dominated in $G_{t'(e)}$.} 
		\State set $t(e) = t{(e)}+1$ and go-to step 3. \Comment{$t{(e)}+1$ denotes the next grade.}
		\EndIf
		\Else \Comment{the next step is an inclusion $G_{t(e)}  \hookrightarrow G_{t(e)+1}$.}
		\State set $t(e) = t(e)+1$  and go-to step 3.
		\EndIf
		\EndIf
		\EndFor
		\State Move forward from $t_1$ to $t_n$ and process edge \textit{removals} symmetric to steps 2-16. \label{alg-zig-rev}
		
		\EndProcedure
	\end{algorithmic}
\end{algorithm} 

We skip the details of Line~\ref{alg-zig-rev}, which is similar to the previous loop.
Note that an edge can be inserted and removed multiple times, in this case, the algorithm proceeds by pairing an inclusion with its next removal.

The above algorithm outlines the essential aspects of the computation but is not optimal. Like \cref{alg:core_flag_filtration} we can use the Adjacency lemma (\cref{lemma:nbd_domination}) to perform fewer domination checks.
In an oscillating rips complex~\cite{RipsZigzagOudotSheehy}, it is quite common for edges to be transient (appear and disappear almost immediately). Identifying such edges and getting rid of both their insertion and removal is the hope of this simplification process. 
\subparagraph{Correctness: }As the underlying principal of the zigzag algorithm is the same as Algorithm~\ref{alg:core_flag_filtration}. We avoid its detailed discussion. To certify the correctness of the algorithm, again we observe the fact that the above simplification algorithm inductively performs either shifting, swapping or trimming of a dominated edge that are validated by \cref{lemma:swap,lemma:zigzag_shift,lemma:trimming}. Note that \cref{lemma:swap,lemma:trimming} extend naturally to the zigzag case.

We can easily parallelize the zigzag simplification algorithm using the same divide and conquer approach described in Section~\ref{sec:parallel}.

\section{Experiments}
\label{sec:experiments}
In this Section we provide various set of experiments to showcase the efficiency of our new approach. We also benchmark the new approach with the current state of the art methods.
\subparagraph{Complete graph.}
\label{sec:exp-complete}
Starting from a complete graph on 700 vertices where all edges appear at the same time, the size of the graph after applying the algorithm several times decreases as 244650 (initial), 5340, 3086, 1307, 788 and finally 699. It stops decreasing after the 5th round since 699 edges is obviously minimal. This example demonstrates that one round of the algorithm is far from producing a fully reduced sequence. However, it removed a large number of edges, which makes subsequent rounds much faster, and may have already reduced the complex enough to compute (persistent) homology.

\subparagraph{Torus: distribution of filtration values.}
We use a dataset with 1307 points on a torus embedded in $\R^3$. Figure~\ref{fig:torus} (left) shows the distribution of the edge lengths. Originally, there are 853471 edges and the longest has size $2.6$. We apply repeatedly the backward algorithm until the process converges. In the end, we are left with 65053 edges, and a maximal filtration value of $1.427$.
\begin{figure}[H]
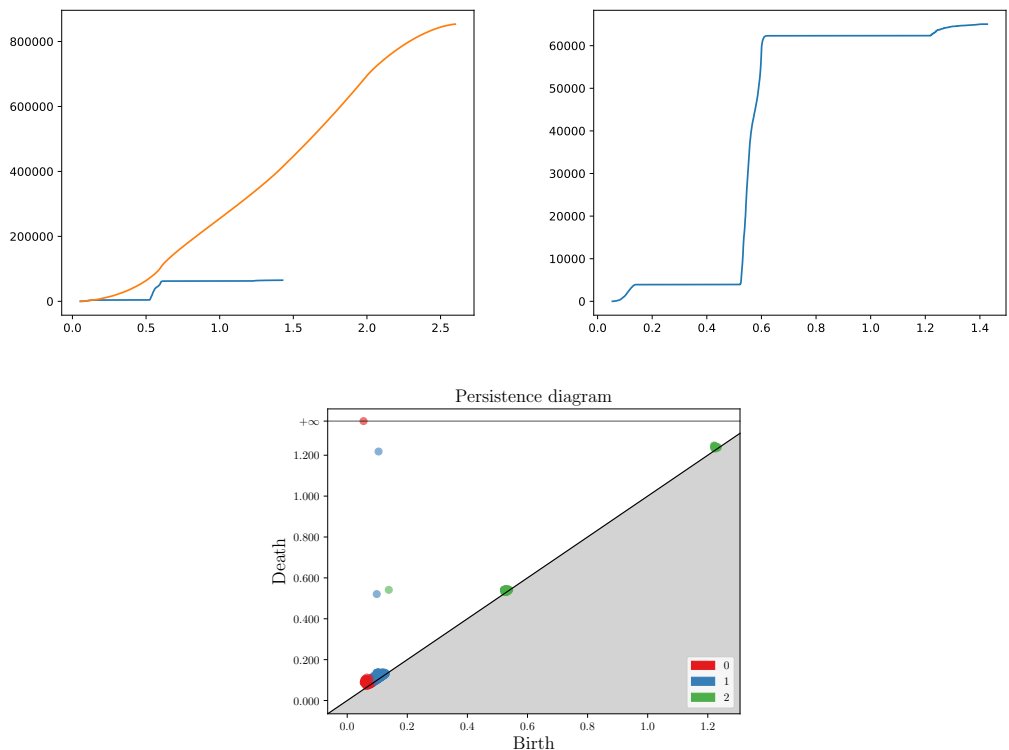

    \centering
    \includegraphics[width=.5\textwidth]{torus-both}%
    \includegraphics[width=.5\textwidth]{torus-after}
    \includegraphics[width=.5\textwidth]{torus-diag}
    \caption{Filtration value of edges for a torus (top). Orange is for original edges and blue after collapse. Top right: enlarged blue graph. Bottom: persistence diagram.}
    \label{fig:torus}
\end{figure}
First, note that some implementations (of which the first one is Eirene~\cite{eirene}) of Rips persistence first check at which filtration value the complex becomes a cone (here around $2$) and ignore longer edges. In our algorithm, this check is performed implicitly  and the long edges are dominated by the apex of the cone and thus get removed (we actually manage to go significantly lower than $2$). Still, it remains sensible to avoid those edges when possible.

After collapsing, we notice several regions in the curve. First some short edges are added progressively, until the complex gets the homotopy type of a torus. Then nothing happens for a while, until we have enough edges to kill one of the 1-cycles and fill the cavity, where many edges are inserted at the same time. Then again nothing happens while the complex is equivalent to a circle, until we can kill this last 1-cycle, and the process quickly stops with a contractible complex.

\subparagraph{Benchmark backward vs forward.} We benchmark the new backward algorithm with the forward algorithm. For the forward algorithm, we use the code from Giotto-ph~\cite{giotto-ph}, which is derived from our implementation in Gudhi but faster by a factor $1.5$ to $2$. Our bench marking considers two aspects: run-time and reduction size (see \cref{fig:speed}). The datasets are: \emph{uniform} for an i.i.d. sample of points in a square, \emph{sparse} for the same, but using a low threshold on the maximal size of edges, \emph{polygon} for a regular polygon, \emph{circle} for an i.i.d. uniform sample of a circle, \emph{dragon} comes from~\cite{NinaPaper} and \emph{O3} from~\cite{ripser} (the first version uses a threshold of $1.4$ on edge lengths).

The backward algorithm comes with an optimization using a \emph{dense} array indexed by vertices. This usually speeds things up nicely, but in cases where the original set of edges is very sparse, this dense array can be an issue, so we also have a version without this array, denoted \emph{sparse}.
\begin{table}[H]
\centering
\begin{tabular}{l|l|l|l|l|l|l|l|}
\cline{4-8}
\multicolumn{3}{c|}{} & \multicolumn{2}{c|}{Forward} & \multicolumn{3}{c|}{Backward} \\
\cline{2-8}
& vertices & before & after & time & after & time dense & time sparse \\
\hline
uniform & 1000 & 499500 & 2897 & 2.4 & 2897 & {\bf 1.7} & 2.4 \\
\hline
sparse & 50000 & 389488 & 125119 & 0.3 & 125119 & 1.9 & {\bf 0.17} \\
\hline
polygon & 300 & 44850 & 44701 & 3.6 & 44701 & {\bf 0.5} & 1 \\
\hline
circle & 300 & 44850 & 41959 & 4.8 & 41959 & {\bf 0.4} & 0.8 \\
\hline
complete & 900 & 404550 & 24540 & 43 & {\bf 5980} & {\bf 0.4} & 0.4 \\
\hline
torus & 1307 & 853471 & 94993 & 31 & 94993 & {\bf 3.2} & 5 \\
\hline
dragon & 2000 & 1999000 & 53522 & 29 & 53522 & {\bf 14} & 20 \\
\hline
O3 (1.4) & 4096 & 4107941 & 13674 & 59 & 13674 & {\bf 37} & 51 \\
\hline
O3 & 1024 & 523776 & 519217 & 200 & 519217 & {\bf 12} & 23 \\
\hline
\end{tabular}
\caption{Run-time and reduction size comparison. Column \textit{before} and \textit{after} contains the number of edges before and after collapsing, and column \textit{time} contains run time in seconds of the collapse.}
\label{fig:speed}
\end{table}

Table~\ref{fig:speed} shows a clear advantage for the backward algorithm in cases where few edges can be removed, or when several edges have the same filtration value. Except for \emph{complete} which is a plain complete graph with every edge at the same filtration value, all edges are computed as Euclidean Rips graphs.

When all the input edges have distinct filtration values, both algorithms output exactly the same list of edges. However, this isn't the case anymore when multiple edges have the same filtration value (and in particular if we apply the algorithm several times). The forward algorithm, as presented, relies on the order of the edges and does not take advantage of edges with the same filtration value. The backward algorithm, at its core, checks if an edge is dominated \emph{at a specific filtration value (grade)}. As seen in Table~\ref{fig:speed}, for a complete graph on 900 vertices, the backward algorithm outputs 5 times fewer edges than the forward algorithm.

\subparagraph{Size gains with approximate version.}
\begin{table}[h]
\centering
\begin{tabular}{c|c|c|c|c|c|c|c|c|}
& original & 1 (exact) & 1.01 & 1.1 & 1.5 & 2 & 10 & 100 \\ \hline
uniform & 499500 & 2897 & 2891 & 2859 & 2609 & 2462 & 2356 & 2353 \\ \hline
circle & 44850 & 42007 & 30423 & 20617 & 17552 & 16404 & 14574 & 14342 \\
\it (seconds) & & \it 0.4 & \it 0.33 & \it 0.22 & \it 0.16 & \it 0.14 & \it 0.12 & \it 0.115 \\ \hline
dragon & 1999000 & 53522 & 52738 & 52161 & 45439 & 40564 & 36094 & 35860 \\ \hline
O3 (1.4) & 4107941 & 13674 & 13635 & 13418 & 12682 & 12050 & 11828 & 11823 \\ \hline
\end{tabular}
\caption{Gains with the approximate algorithm, for different interleaving factors.}
\label{tab:approx}
\end{table}
Table~\ref{tab:approx} shows the number of remaining edges when we don't require the output to have the same persistence diagram, but only ask that the modules be multiplicatively $\alpha$-interleaved. Usually, the approximate version gives modest gains over the exact version, for roughly the same running time. However, in some cases that are hard to simplify like the circle, even a small error allows a significant number of collapses.

\subparagraph{Parallelism benchmark.}
\begin{figure}[h]
    \centering
    \includegraphics[height=8cm]{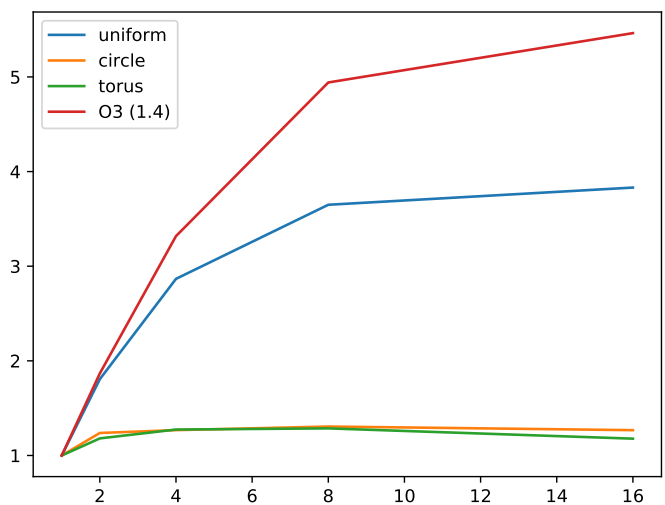}
    \caption{Speed gain in function of the number of threads.}
    \label{fig:parallel}
\end{figure}
We wrote a limited\footnote{This implementation assumes that no two edges have the same filtration value.} prototype based on \lstinline{tbb::parallel_reduce} and tested it on an i7-10875H CPU (8 cores, 16 threads) by limiting the number of threads. \autoref{fig:parallel} shows promising results for some datasets, but also that there is room for better parallel algorithms.

\subparagraph{Persistence benchmark.}

In our experience, doing edge collapses before computing persistent homology helps a lot for (homology) dimension 2 or higher. However, it is a terrible idea if we only care about dimension 0, since computing 0-persistence is cheaper than this simplification can ever hope to be. The case of dimension 1 is more mixed, it can help in some cases and hurt in others. By default we would only recommend its use for dimension  greater than or equal to 2.

For convenience, the persistence computation is done using the version of Ripser~\cite{ripser} found in giotto-ph~\cite{giotto-ph} with $n\_threads=1$, and with our new backward algorithm. This means that edges after the complex has become a cone are ignored. \autoref{tab:ripser} shows the time it takes to compute persistent homology in dimension up to $k$, either directly, or first collapsing before computing it.

\begin{table}[h]
\centering
\begin{tabular}{c|c|c|c|c|c|}
 & dim 1 & collapse \& dim 1 & dim 2 & collapse \& dim 2 & collapse \& dim 3 \\ \hline
torus3D & 6.2 & 3.8 & 75 & 6.4 & 47 \\ \hline
dragon & 3.3 & 9.2 & 148 & 9.7 & 16.3 \\ \hline
\end{tabular}
\caption{Persistent homology computation time in seconds, with or without edge collapse.
}
\label{tab:ripser}
\end{table}

\bibliography{mybib}
\appendix

\label{sec:appendix_comb_shift}

\section{Combinatorial view on shifting} \label{sec:combi_shift}
In this Section, we present a more combinatorial view of the building blocks from~\autoref{sec:shift_swap_trim}, that works directly on a sequence of edges.
Before we proceed, we will fix some notations. Let $\{1, 2, \cdots, n\}$ be a finite index set and $\{t_1, t_2, \cdots, t_n\}\in\R^n$ associated \textit{filtration values} such that for $i < j$, $t_i \leq t_j$. For convenience, we may consider that $t_{n+1}=\infty$. With each $i$ we associate a graph $G_i$
such that $G_{i} \hookrightarrow G_{i+1}$ is an \textit{elementary inclusion}, namely the inclusion of a single edge $e_{i+1}$. The flag complex of $G_{i}$ is denoted as $\overline{G}_{i}$ and we consider the associated flag filtration $\mathcal{F} : \overline{G}_{1} \hookrightarrow \overline{G}_{2} \hookrightarrow \cdots \hookrightarrow \overline{G}_{n}$. The edges in the set $E := \{e_{1}, e_{2}, \cdots e_{n} \}$ are thus indexed with an order compatible with the filtration values. The persistence diagram is defined as the multiset of points $(t_i,t_j)$ corresponding to the half-open intervals $[i,j)$ in the decomposition of the homology module, with the points $(t,t)$ on the diagonal removed\footnote{It is also possible to take all the points on the diagonal with infinite multiplicity, with the same effect.}.

\begin{lemma} [Shifting Lemma] \label{lemma:c_shift}
  If $e_{i}$ is a dominated edge in the graph $G_{i}$. Then, the filtration value $t_i$ can be replaced with any filtration value $t'_i$ such that $t_{i-1}\leq t'_i\leq t_{i+1}$ without changing the persistence diagram.
\end{lemma}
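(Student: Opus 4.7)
The plan is to leverage the edge collapse property recalled in~\cref{sec:prel}: since $e_i$ is dominated in $G_i$, the inclusion $G_{i-1} = G_i \setminus e_i \hookrightarrow G_i$ induces an isomorphism on homology. In the associated persistence module $\mathbb{V}$ with $V_j = H_p(\overline{G}_j)$, this says that the transition map $V_{i-1} \to V_i$ is an isomorphism of vector spaces, and this is the only structural fact we will need.

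The key step is to translate this isomorphism into a statement about intervals. I would argue that no interval $[b,d)$ appearing in the decomposition of $\mathbb{V}$ can have $b = i$ or $d = i$. An interval summand $\mathbb{I}[b,d)$ is one-dimensional at index $j$ precisely when $b \le j < d$; since the map $V_{i-1} \to V_i$ is an isomorphism, every summand must have the same dimension at $i-1$ and at $i$, and in fact be connected by an identity there. A quick case check rules out $b = i$ (which would contribute a nonzero cokernel) and $d = i$ (which would contribute a nonzero kernel). Consequently, the numerical value $t_i$ never appears as a coordinate of any diagram point $(t_b, t_d)$.

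To conclude, I would observe that the constraint $t_{i-1} \le t'_i \le t_{i+1}$ is exactly what keeps the new filtration-value sequence non-decreasing, so the ordered sequence of graphs $G_1 \hookrightarrow \cdots \hookrightarrow G_n$ (and hence the persistence module itself) is unchanged; only the numerical label attached to index $i$ is modified. Since $t_i$ does not appear in any diagram point and every other label $t_j$ is untouched, the multiset of points making up the persistence diagram is identical before and after the replacement. No substantive obstacle stands in the way; the only minor subtlety is the boundary case $t'_i \in \{t_{i-1}, t_{i+1}\}$ where two filtration values coincide, but this is harmless here since we are only relabelling an index and not reordering any edges.
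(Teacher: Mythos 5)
Your proposal is correct and follows essentially the same route as the paper: domination makes $H(\overline{G}_{i-1})\to H(\overline{G}_{i})$ an isomorphism, hence no (half-open) interval of the decomposition starts or ends at index $i$, so no diagram point depends on the label $t_i$, and the constraint $t_{i-1}\leq t'_i\leq t_{i+1}$ is only there to keep the value sequence non-decreasing. Your kernel/cokernel case check just spells out the paper's one-line assertion in slightly more detail.
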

The interval decomposition of the module remains the same, since the module does not change. Because $H(\overline{G}_{i-1})\rightarrow H(\overline{G}_{i})$ is an isomorphism, no non-trivial interval starts or ends\footnote{It is important here that we defined the intervals of the decomposition as half-open.} at index $i$, and thus no point in the diagram is affected by the value of $t'_i$, we only need to preserve the non-decreasing property of $t$. We will usually set $t'_i=t_{i+1}$.

\begin{lemma} [Swapping Lemma] \label{lemma:c_swap}
  If $t_i=t_{i+1}$, we can exchange edges $e_i$ and $e_{i+1}$ without changing the persistence diagram.
\end{lemma}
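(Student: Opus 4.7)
The plan is to compare the interval decompositions of the two modules $M$ (original) and $M'$ (after swapping $e_i$ and $e_{i+1}$) through the standard rank-based formula for multiplicities, and then to use $t_i = t_{i+1}$ to show that any discrepancy either falls on the diagonal or is absorbed by a matching discrepancy at the adjacent index.

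First, I would observe that the two filtrations produce identical complexes $\overline{G}_j$ for every $j \neq i$, and that the composite map $H(\overline{G}_{i-1}) \to H(\overline{G}_{i+1})$ coincides in both modules, since both are induced by the inclusion that adds the pair of edges $\{e_i, e_{i+1}\}$. As a consequence, the rank of $H(\overline{G}_a) \to H(\overline{G}_b)$ agrees in $M$ and $M'$ whenever $a, b \neq i$.

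Next I would invoke the standard formula writing the multiplicity of an interval $[j,k)$ in the decomposition as an alternating sum of four ranks indexed by $\{j-1,j\}\times\{k-1,k\}$. This immediately yields equality of multiplicities in $M$ and $M'$ for every interval $[j,k)$ with $\{j-1, j, k-1, k\} \cap \{i\} = \emptyset$. The intervals whose multiplicity may genuinely change are therefore exactly those touching index $i$, and because $t_i = t_{i+1}$ these group naturally into pairs producing the same diagram point: $[i,k)$ with $[i+1,k)$ for $k > i+1$, and $[j,i)$ with $[j,i+1)$ for $j < i$.

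The main step is then to check that within each such pair the sum of multiplicities telescopes so that every rank involving index $i$ cancels, leaving only ranks between indices distinct from $i$, which are preserved by the swap. The remaining interval $[i, i+1)$ contributes to the diagonal point $(t_i, t_i)$, which is removed from the diagram. I expect the only real obstacle to be this bookkeeping for the boundary sums, but it is a short telescoping calculation requiring no new idea.
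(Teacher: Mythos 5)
Your argument is correct, but it takes a genuinely different route from the paper. The paper disposes of this lemma in one sentence by appealing to the detailed analysis of simplex transpositions for vineyards, noting informally that interval endpoints may migrate between $i$ and $i+1$ and that intervals $[i,i+1)$ may appear or disappear. You instead give a self-contained computation: the ranks $r_{a,b}$ of the maps $H(\overline{G}_a)\to H(\overline{G}_b)$ with $a,b\neq i$ are unchanged by the swap (because the composite across index $i$ is induced by the same inclusion $\overline{G}_{i-1}\hookrightarrow\overline{G}_{i+1}$ in both filtrations), and the inclusion--exclusion formula for interval multiplicities then shows that the only affected intervals are $[i,k)$, $[i+1,k)$, $[j,i)$, $[j,i+1)$ and $[i,i+1)$; pairing them by common diagram point (using $t_i=t_{i+1}$) makes every rank involving index $i$ telescope away, and the leftover $[i,i+1)$ lands on the diagonal, which the paper's convention discards. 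Your approach buys two things: it is elementary and does not require the vineyards machinery, and it sidesteps a small gap in the citation route, namely that exchanging two \emph{edges} of a flag filtration is a block swap of many simplices (each edge together with its cofaces), so invoking the single-simplex transposition analysis would itself require a further decomposition. The paper's citation, in exchange, is shorter and yields finer information about how the simplex pairing changes, which is not needed for the statement. Your telescoping bookkeeping checks out, including the infinite intervals $[i,\infty)$ and $[i+1,\infty)$ if you take $k=n+1$ with $t_{n+1}=\infty$; it would be worth saying that explicitly.
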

This defines a new $G'_i=G_{i-1}\cup e_{i+1}$, and although the decomposition may change a bit (some interval extremities may change from $i$ to $i+1$ or vice versa, and intervals $[i,i+1)$ may appear or disappear), the persistence diagram remains the same. This can be seen for instance as a consequence of the detailed analysis of simplex swaps for vineyards~\cite{vineyards}.

\begin{lemma} [Trimming Lemma]\label{lemma:c_trimming}
  If $e_{n}$ is a dominated edge in the graph $G_n$, then reducing the index set to ${1,\ldots,n-1}$ preserves the persistence diagram.
\end{lemma}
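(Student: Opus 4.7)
The plan is to read the Trimming Lemma off the interval decomposition, using the edge collapse property as the single substantive input. Since $e_n$ is dominated in $G_n$, the inclusion $\overline{G}_{n-1}\hookrightarrow \overline{G}_n$ induces an isomorphism $H_p(\overline{G}_{n-1})\xrightarrow{\sim} H_p(\overline{G}_n)$ in every degree, so the persistence module of $\mathcal{F}$ has its final arrow an isomorphism.

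First, I would argue that no interval in the decomposition of the original module can have $n$ as an endpoint. An interval $[i,n)$ would correspond to a class born at index $i\le n-1$ and killed between indices $n-1$ and $n$, which contradicts the injectivity of the final map. An interval $[n,n+1)$ would correspond to a class that appears at index $n$, which contradicts surjectivity of the same map. So every interval in the decomposition is of the form $[i,j)$ with either $j\le n-1$, or $j=n+1$ and $i\le n-1$.

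Next, I would set up the matching with the truncated filtration on indices $\{1,\dots,n-1\}$, whose persistence module is obtained from the original by simply discarding the last vector space. Intervals $[i,j)$ with $j\le n-1$ are unchanged by this truncation and yield the same diagram points $(t_i,t_j)$. Intervals $[i,n+1)$ in the original, which contribute the points $(t_i,\infty)$ under the convention $t_{n+1}=\infty$, correspond to intervals $[i,n)$ in the truncated module, which contribute the points $(t_i,\infty)$ under the analogous convention $t_n=\infty$ for a filtration of length $n-1$. Removing diagonal points has no effect since no interval endpoint changed in a way that moves a point off the diagonal. Hence the two diagrams coincide.

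The main obstacle is nothing more than a bookkeeping step around the $t_{n+1}=\infty$ convention; all the topological content is supplied by the edge collapse isomorphism. As a sanity check, the same conclusion can be obtained as a boundary case of \cref{lemma:c_shift} by shifting $t_n$ up to $t_{n+1}=\infty$, which effectively removes the edge and reduces to the truncated filtration.
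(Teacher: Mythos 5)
Your proof is correct and follows essentially the same route as the paper: the paper dispatches trimming as the special case of the Shifting Lemma with $t'_n=t_{n+1}=\infty$, and that lemma is itself justified by exactly your observation that the edge-collapse isomorphism $H_p(\overline{G}_{n-1})\to H_p(\overline{G}_{n})$ forbids any interval endpoint at index $n$. Your version merely inlines that argument and makes explicit the bookkeeping identifying intervals $[i,n+1)$ of the original module with intervals $[i,n)$ of the truncated one, a step the paper leaves implicit.
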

This can be seen as a special case of the shifting operation (\autoref{lemma:c_shift}), delaying an edge insertion to infinity.

\section{More on parallelism}\label{sec:appendix_parallel}
The algorithm presented in \cref{sec:parallel} is one simple way to parallelize the algorithm in order to show that it has potential. However, there could be several different possibilities.

For some datasets, \cref{alg:core_flag_filtration} spends most of its time computing the common neighbors of 2 vertices of an edge. The parallelization presented here causes us to recompute these edge neighborhoods several times. It would be possible to store them, but that would increase the memory requirements of the algorithm significantly. One tempting possibility is to use some threads to scout ahead, precomputing a small number of those neighborhoods so they are ready when the main sequential pass needs them (up to some minor updates in case other edges shifted).

For other datasets, most of the time is spent checking if an edge is dominated by a vertex. For instance, in the divide-and-conquer algorithm, if we keep many edges of the right half and remove many edges of the left half, the last merging phase is very costly and sequential. It would be possible, instead of shifting one edge all the way through the second half before considering the next edge, to shift it only half of the way, then passing it to some other thread responsible for the last quarter, forming a pipeline.

It is possible to check for all edges in parallel if they are dominated at their insertion index (giving all of them a different index). The case where no edge can be removed can thus be handled in an embarrassingly parallel fashion. After this parallel detection phase, we can proceed with a quick and simple sequential (or partially parallel) pass that removes the final dominated edges and shifts many edges to the next grade with a non-dominated edge. This can significantly reduce the number of different filtration values, which suggests it could be used as preprocessing before running the plain backward algorithm.

\end{document}